\definecolor{DarkGreen}{rgb}{0.1,0.5,0.1}
\newcommand{\density}[1]{den(#1)}
\newcommand{\dominatingSet}{\ensuremath{\gamma(G)}}
\newcommand{\romanDominationNumber}{\ensuremath{\gamma_R(G)}}
\newcommand{\italianDominationNumber}{\ensuremath{\gamma_I(G)}}
\newcommand{\lp}{\reflectbox{\rotatebox[origin=c]{0}{$\vdash$}}}
\newcommand{\rp}{\reflectbox{\rotatebox[origin=c]{180}{$\vdash$}}}
\begin{document}
\title{$m$-Eternal Domination and Variants\\ on Some Classes of Finite and Infinite Graphs}
\titlerunning{$m$-Eternal Domination on Finite and Infinite Graphs}
%
\author{T. Calamoneri$^1$\orcidID{0000-0002-4099-1836}, F. Cor\`o$^2$\orcidID{0000-0002-7321-3467}, \\
N. Misra$^3$\orcidID{0000-0003-1727-5388}, S.G. Nanoti$^3$\orcidID{0009-0009-7789-8895}, \\
G. Paesani$^1$\orcidID{0000-0002-2383-1339}}
%
\authorrunning{T. Calamoneri et al.}
%
\institute{Sapienza University of Rome, Italy \and
University of Padua, Italy \and
Indian Institute of Technology, Gandhinagar, India
}
%
\maketitle              
\begin{abstract}
We study the {\sc $m$-Eternal Domination} problem, which is the following two-player game between a defender and an attacker on a graph: initially, the defender positions $k$ guards on vertices of the graph; the game then proceeds in turns between the defender and the attacker, with the attacker 
selecting
a vertex and the defender 
responding to
the attack by moving a guard to the attacked vertex. The defender may move more than one guard on their turn, but guards can only move to neighboring vertices. 
The defender wins a game on a graph $G$ with $k$ guards if the defender has a strategy 
such that at every point of the game the vertices occupied by guards form a dominating set of $G$
and 
the attacker wins otherwise.
The {\em $m$-eternal domination number} of a graph $G$ is the smallest value of $k$ for which $(G,k)$ is a defender win. 

We show that {\sc $m$-Eternal Domination} is NP-hard, as well as some of its variants, even on special classes of graphs.
We also show structural results for the {\sc Domination} and {\sc $m$-Eternal Domination} problems in the context of four types of 
infinite regular grids: square, octagonal, hexagonal, and triangular, establishing tight bounds.

\keywords{Eternal Domination \and Roman Domination \and Italian Domination \and NP-hardness \and Bipartite Graphs \and Split Graphs
\and Infinite Grids.
}
\end{abstract}
%
%
\section{Introduction}

A subset $S$ of vertices in a simple undirected graph $G$ is called a \emph{dominating set} if every vertex outside of $S$ has a neighbor in $S$. 
Finding a smallest-sized dominating set is a fundamental computational problem, and indeed, several variations of this basic premise have been considered in the literature~\cite{HHS98}.

In this work, our focus is on the $m$-eternal version of the {\sc Domination} problem, where we think of the vertices of the dominating set as being occupied by ``guards'' that can move in response to ``attacks'' on the vertices. Specifically, we consider the following two-player graph game. To begin with, the defender places $k$ guards on vertices of the graph. The game continues with alternating turns between an attacker and the defender. On each turn, the attacker chooses a vertex to attack, and the defender responds by repositioning guards, ensuring at least one guard moves onto the attacked vertex. The defender may relocate multiple guards during a turn, but each guard can only move to an adjacent vertex. The defender wins if there exists a strategy that maintains a dominating set of vertices occupied by guards throughout the entire game. Otherwise, the attacker wins.The $m$-eternal domination number of a finite graph $G$ is the smallest value of $k$ for which the defender wins.

Two 
variations of the 
domination
problem are {\sc Roman Domination} and {\sc Italian Domination}. A Roman dominating function on a graph \( G = (V, E) \) is a function \( f: V \rightarrow \{0, 1, 2\} \) such that every vertex \( v \) with \( f(v) = 0 \) has at least one neighbor \( u \) with \( f(u) = 2 \). The weight of a Roman dominating function is the sum \( \sum_{v \in V} f(v) \). The Roman domination number \( \gamma_R(G) \) is the minimum weight of a Roman dominating function on \( G \). This concept is inspired by the defensive strategy of the Ancient Roman Empire, where guards were stationed such that every unguarded position was adjacent to a doubly guarded position. An Italian dominating function on a graph \( G = (V, E) \) is a function \( f: V \rightarrow \{0, 1, 2\} \) such that for every vertex \( v \) with \( f(v) = 0 \), the sum of \( f(u) \) over all neighbors \( u \) of \( v \) is at least 2. The weight of an Italian dominating function is the sum \( \sum_{v \in V} f(v) \). The Italian domination number \( \gamma_I(G) \) is the minimum weight of an Italian dominating function on \( G \). This concept generalizes Roman domination by allowing more flexible assignments of resources to vertices and their neighbors.

Given the context of warfare, it is very natural to propose studying the ``eternal'' variations of Roman and Italian domination. Presumably, regions will be attacked more than once, and the guards will have to find ways of reconfiguring themselves so that they maintain the defense invariants that they started with. 

\paragraph{Our Contributions.} In \Cref{ss:cc}, we consider the computational complexity of the {\sc $m$-Eternal Domination} problem and the Roman and Italian variants in the $m$-eternal setting. We show that {\sc $m$-Eternal Domination} (and its connected variant, where we require the locations of the guards to induce a connected subgraph) is NP-hard even on bipartite graphs of diameter four (\Cref{thm:eternalDominationNPHard}); and the {\sc $m$-Eternal Roman Domination} and {\sc $m$-Eternal Italian Domination} problems are NP-hard even on split graphs (Theorems \ref{thm:eternalRomanDominationNPHard} and \ref{thm:eternalItalianDominationNPHard}). 

It is well known~\cite{BDEMY17} that for split graphs, the eternal domination number is at most its domination number plus one, and a characterization of the split graphs which achieve equality is given. The authors also show that the decision versions of {\sc Domination} and {\sc $m$-Eternal Domination} are NP-complete, even on Hamiltonian split graphs. Moreover, computing the eternal domination number can be solved in polynomial time for any subclass of split graphs for which the domination number can be computed in polynomial time, in particular for strongly chordal split graphs.

In \Cref{ss:inf}, we focus on structural results for the {\sc Domination} and {\sc $m$-Eternal Domination} problems in the context of four types of infinite regular grids: square, octagonal, hexagonal, and triangular. We prove $m$-eternal dominating sets that are optimal according to a parameter (that will be defined later) expressing the concept of minimality in infinite graphs.

We highlight that exact results for the eternal domination number are given in \cite{FMV15,GKM13,DM17} and, more recently, in \cite{FMV20}, for grids with either 2 or 3 rows, and in \cite{VV16} for grids with 4, 5, or 6 rows. When $G$ is a general $n \times m$ square grid, it is clear that the eternal domination number must be at least as large as the domination number, so by the result in \cite{GPRT11}, it must be at least $\lfloor \frac{(n-2)(m-2)}{5} \rfloor -4$;
the best-known upper bound was determined in \cite{LMS19}, and is $\frac{mn}{5} +O(n+m)$, thus showing that the eternal domination number is within $O(m+n)$ of the domination number.

\section{Preliminaries}

In this section, we introduce definitions and terminology that will be relevant to our discussions later. We will be dealing with simple undirected graphs denoted by \( G = (V, E) \). The \emph{degree} of a vertex \( v \)
is the number of edges incident to~\( v \). An {\em independent set} or {\em clique} is a subset of vertices such that no or every possible edge is present, respectively. The \emph{neighborhood} of a vertex \( v \), denoted \( N(v) \), is the set of all vertices adjacent to \( v \). The \emph{closed neighborhood} of a vertex $v$, denoted $N[v]$, is the set $N(v) \cup \{v\}$. A graph is \emph{bipartite} if its vertex set \( V \) can be partitioned into two disjoint {\em independent sets} \( U \) and \( W \). A \emph{split graph} is a graph in which the vertex set \( V \) can be partitioned into a clique and an independent set. In this paper, we say that a graph $G=(V,E)$ is infinite if $V$ is countably infinite. For more detailed graph terminology, we refer to \cite{D12}, while for computational complexity terminology, we refer to \cite{GJ79}.

\subsection{Domination, $m$-eternal dominations and variants}

We first introduce the definitions of domination, Roman domination, and Italian domination. Then, we give the concept of an $m$-eternal domination game and define all the previous kinds of domination under this light.

\begin{definition}
\cite{B62,O62}
A \emph{dominating set} for a graph \(G = (V, E)\) is a subset \(D \subseteq V\) such that every vertex not in \(D\) is adjacent to at least one vertex in \(D\). The \emph{domination number} \(\dominatingSet\) is the minimum cardinality of a dominating set in \(G\). If the subgraph induced by $D$ in $G$ is connected, then $D$ is said to be a {\em connected dominating set}.
\end{definition}

The {\sc Domination} problem (
determine a minimum dominating set for graphs) is NP-hard even for planar graphs \cite{GJ79} and bipartite graphs of diameter three \cite{MS24}.

\begin{definition} 
\cite{R97}
A \emph{Roman dominating function} on a graph \(G = (V, E)\) is a function \(f: V \to \{0, 1, 2\}\) such that every vertex \(v\) for which \(f(v) = 0\) is adjacent to at least one vertex \(u\) for which \(f(u) = 2\). The \emph{Roman domination number} \(\romanDominationNumber\) is the minimum weight \(\sum_{v \in V} f(v)\) of a Roman dominating function on \(G\).
If the subgraph induced by $D=\{ v\,\, | \,f(v) \neq 0\}$ in $G$ is connected, then $D$ is said to be a {\em connected Roman dominating set}.
\end{definition}

The {\sc Roman Domination} problem (which is to determine the Roman domination number of a graph) is NP-hard even for split graphs, bipartite graphs, and planar graphs \cite{Cal04}, for chordal graphs \cite{LC13}, and for subgraphs of grids \cite{NS15}.

\begin{definition}
\cite{Cal16,HK17}
An \emph{Italian dominating function} on a graph \(G = (V, E)\) is a function \(f: V \to \{0, 1, 2\}\) such that, for every vertex \(v\) for which \(f(v) = 0\), the sum of the function values of the neighbors of \(v\) is at least $2$. The \emph{Italian domination number} \(\italianDominationNumber\) is the minimum weight \(\sum_{v \in V} f(v)\) of an Italian dominating function on \(G\).
If the subgraph induced by $D=\{ v\,\, | \,f(v) \neq 0\}$ in $G$ is connected, then $D$ is said to be a {\em connected Italian dominating set}.
\end{definition}

The {\sc Italian Domination} problem (which is to determine the Italian domination number of a graph) is NP-hard even for bipartite planar graphs, chordal bipartite graphs \cite{FL23}, and split graphs \cite{chen2018note}.

The {\em $m$-eternal domination game} is a two-player turn-based game on graph $G$: a team of guards initially occupies a dominating set on a graph $G$. An \emph{attacker} then assails a vertex without a guard on it; the \emph{defender} exploits the guards to defend against the attack: one of the guards has to move to the attacked vertex from one of its neighbors, 
while the defender can choose whether to move or not the remaining guards to one of its neighbor vertices. This attack-defend procedure continues eternally. The defender wins if they can eternally maintain a dominating set against any sequence of attacks; otherwise, the attacker wins.

\begin{definition} \cite{GHH05}
An {\em $m$-eternal domination set} for a graph $G=(V, E)$ is a subset $D \subseteq V$ where a defender can place its guard and win the $m$-eternal domination game.
The \emph{$m$-eternal domination number} of $G$, denoted by $\gamma^\infty(G)$, is the minimum number of guards required to defend an indefinite sequence of attacks. 
\end{definition}

We use the term \emph{configuration} to refer to a state of the $m$-eternal domination game or guard positions. Note that when configurations correspond to dominating functions rather than sets, the value of the function at a vertex indicates the number of guards occupying the vertex, and again multiple guards are allowed to move in one round; no vertex accommodates more than two guards in a configuration, and guards can only move to neighbor vertices. 
By requiring the configuration of guards at every step to correspond to a Roman dominating function or to an Italian dominating function instead of a dominating set, we obtain the \emph{$m$-eternal Roman domination} and \emph{$m$-eternal Italian domination games}, and the minimum numbers of guards necessary to defend for ever $G$ are the \emph{$m$-eternal Roman domination} and \emph{$m$-eternal Italian domination numbers} and denoted by $\gamma^\infty_R(G)$ and $\gamma^\infty_I(G)$, respectively. 
Similarly to what we did for $m$-domination, Roman and Italian domination, it is possible to define $m$-eternal connected domination and $m$-eternal Roman and Italian connected domination.

In this paper, we do not consider variations of these games where only one guard is allowed to move in one step (as, {\em e.g.}, in \cite{Bal04}). To distinguish from this variation, the $m-$ prefix highlights that multiple guards are allowed to move. 

%
%
We focus on the \textsc{$m$-Eternal Domination} computational problem, consisting in taking as input a graph $G$ and determining $\gamma^\infty(G)$.
The \textsc{$m$-Eternal Roman Domination} and \textsc{$m$-Eternal Italian Domination} problems are defined analogously.



We make here also a couple of remarks on terminology. When we refer to the {\sc $m$-Eternal Domination} problem (and variants), unless mentioned otherwise, we are referring to the optimization version of the question, as opposed to explicitly calling it the minimization problem. Also, we use the term \emph{configuration} to refer to a state of the $m$-eternal domination game or guard positions. 

\subsection{Infinite Regular Grids}

We work with four kinds of infinite regular grids, shown in~\Cref{fig:grids}. 
In the following definitions and the rest of the paper, we adopt the following convention. We imagine that the vertices of the infinite grids occupy integer coordinates of the Cartesian plane.   
Despite this, we do not mean to consider the grids as graphs embedded in a metric space. We make this assumption to describe the moving strategies in an easier way.

We formally define the infinite regular grids as follows: 
\begin{itemize}
\item {\bf infinite square grid $T_4$} (see~\Cref{fig:infinite-grid-square}): every vertex $(x,y)$ is connected with $(x+1, y)$, $(x-1, y)$, $(x, y+1)$ and $(x, y-1)$;
\item {\bf infinite octagonal grid $T_8$} (see~\Cref{fig:infinite-grid-octagonal}): every vertex $(x,y)$ is connected with $(x+1, y)$, $(x-1, y)$, $(x, y+1)$, $(x, y-1)$, $(x-1, y-1)$, $(x+1, y+1)$, $(x-1, y+1)$ and $(x+1, y-1)$;
\item {\bf infinite hexagonal grid $T_3$} (see~\Cref{fig:infinite-grid-hex}): every vertex $(x,y)$ is connected both with $(x, y+1)$ and with $(x, y-1)$; moreover, every vertex $(x,y)$ is connected to $(x+1,y)$ if $x+y$ is even and to $(x-1,y)$ otherwise;
\item {\bf infinite triangular grid $T_6$} (see~\Cref{fig:infinite-grid-tri}): every vertex $(x,y)$ is connected with $(x+1, y)$, $(x-1, y)$, $(x, y+1)$, $(x, y-1)$, $(x-1, y-1)$ and $(x+1, y+1)$.
\end{itemize}

\begin{figure}[ht]
\centering
\subfloat[$T_4$\label{fig:infinite-grid-square}]{
\includegraphics[scale=0.65]{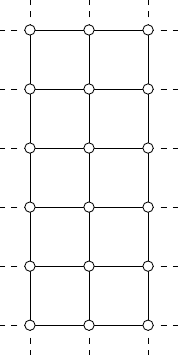}
}~
\subfloat[$T_8$\label{fig:infinite-grid-octagonal}]{
\includegraphics[scale=0.65]{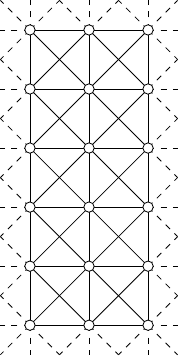}
}
\subfloat[$T_3$\label{fig:infinite-grid-hex}]{
\includegraphics[scale=0.65]{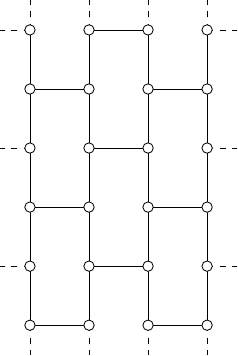}
}~
\subfloat[$T_6$\label{fig:infinite-grid-tri}]{
\includegraphics[scale=0.65]{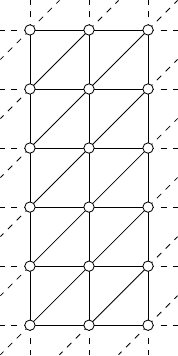}
}~
\caption{The four considered infinite regular grid graphs.}
\label{fig:grids}
\end{figure}

\subsection{Optimal Domination}


For infinite graphs, the meaning of a minimum set w.r.t. a set of properties is far from clear. Some related work uses notions of {\em density} that can be applied to infinite graphs w.r.t. different properties, {\em e.g.}, for cops configurations in Cops and Robber games~\cite{BHW07} or for eternal vertex cover configurations on regular grids~\cite{CC22}. In the context of dominating sets for infinite graphs, we choose to avoid working with a density-based approach and instead adopt a local cardinality-based approach of determining when a proposed dominating set is optimal. Given two dominating sets $S$ and $T$, we say that $S$ is \emph{at least as good as} $T$ if there is an injective map from $S$ to $T$, that is, the cardinality of $S$ is smaller or equal to that of $T$. A dominating set is {\em optimal} if it is at least as good as every other dominating set of $G$. We now turn to a stronger notion of optimality.

Consider an infinite graph $G$ with finite maximum degree. Given a subset $S$ of $V(G)$ and a vertex $v \in V(G)$, we define the \emph{domination index} of $v$ w.r.t. $S$ as $|S \cap N[v]|$. A dominating set $S$ is \emph{strongly optimal} if, for every vertex $v$ of the graph, the domination index of $v$ w.r.t. $S$ is exactly one\footnote{In the context of finite graphs, a dominating set such that every vertex is uniquely dominated is called an exact dominating set, and because of connections with coding theory, in some settings such dominating sets are called perfect codes.}. Note that a strongly optimal dominating set may not exist: for example, consider the infinite complete bipartite graph $G$ with $V(G)= \{a_i~|~i \in\mathbb{N}\} \cup \{b_i~|~i \in \mathbb{N}\}$ and $E(G)= \{(a_i,b_j)~|~ i,j \in \mathbb{N}\}$. However, we note that a strongly optimal dominating set, if it exists, is also optimal:

\begin{lemma}
If $S$ is a strongly optimal dominating set and $T$ is any dominating set of an infinite graph $G$ with finite maximum degree, then $S$ is at least as good as $T$.
\end{lemma}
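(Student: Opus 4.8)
The plan is to build an explicit injection $\phi\colon S \to T$ directly from the strong optimality of $S$, rather than reasoning abstractly about cardinalities. The structural fact I would extract first is that strong optimality makes the closed neighborhoods of the elements of $S$ a partition of $V(G)$: by definition, for every vertex $v$ the domination index $|S \cap N[v]|$ equals exactly one, so each vertex lies in the closed neighborhood of precisely one element of $S$. Equivalently, the sets $N[s]$ with $s \in S$ are pairwise disjoint and cover $V(G)$.

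Next I would define the map. For each $s \in S$, since $T$ is a dominating set, $s$ is either in $T$ or adjacent to some vertex of $T$; in both cases $N[s] \cap T \neq \varnothing$. I would therefore select an arbitrary vertex $\phi(s) \in N[s] \cap T$ (a harmless choice, since $V(G)$ is countable and can be well-ordered so that we simply take the least such vertex). This yields a well-defined function $\phi\colon S \to T$.

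The heart of the argument is injectivity, and this is where I would invoke strong optimality a second time, now evaluated at a vertex of $T$. Suppose $\phi(s_1) = \phi(s_2) = t$ for some $t \in T$. Then $t \in N[s_1]$ and $t \in N[s_2]$. Because $G$ is undirected, $t \in N[s]$ is equivalent to $s \in N[t]$, so both $s_1$ and $s_2$ belong to $S \cap N[t]$. Strong optimality applied to the vertex $t$ gives $|S \cap N[t]| = 1$, forcing $s_1 = s_2$. Hence $\phi$ is injective, and by the definition of \emph{at least as good as}, $S$ is at least as good as $T$.

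The step I expect to be the main (if modest) obstacle is recognizing and exploiting the symmetry $t \in N[s] \iff s \in N[t]$, which is exactly what lets me convert the uniqueness guaranteed by strong optimality at the vertex $t$ into injectivity of $\phi$; everything else is bookkeeping. It is worth remarking that the finite-maximum-degree hypothesis is not actually used in constructing this injection: writing $\Delta$ for the maximum degree, it only guarantees, via the bound $|V(G)| \le |T|\,(\Delta+1)$, that every dominating set of $G$ must be infinite, which is what prevents the cardinality comparison underlying optimality from collapsing into a triviality.
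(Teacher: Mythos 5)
Your proposal is correct and follows essentially the same route as the paper: the same map $\phi(s) \in N[s]\cap T$, well-definedness from $T$ being dominating, and injectivity from the domination index of the common image being $1$ (the paper phrases this as the index being at least two yielding a contradiction, which is the same use of strong optimality at the image vertex). The extra observations about the partition into closed neighborhoods and the role of the finite-degree hypothesis are accurate but not needed.
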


\begin{proof}
Consider the following map $f: S \rightarrow T$ defined as follows: for every vertex $v \in S$, let $f(v)$ be an arbitrarily chosen vertex from $N[v] \cap T$. 
\\
We first show that $f$ is well-defined. Since $T$ is a dominating set, for every $v \in S$, we have $N[v] \cap T \neq \emptyset$, so there always exists a vertex to which $v$ can be mapped. Next, we prove that $f$ is injective. Suppose, by contradiction, that there exist two distinct vertices $u, v \in S$ such that $f(u) = f(v) = w$ for some $w \in T$. This implies that $w \in N[u] \cap N[v]$, and consequently, the domination index of $w$ w.r.t. $S$ would be at least two. However, this contradicts the fact that $S$ is a strongly optimal dominating set, where every vertex must have a domination index of exactly one.
Therefore, $f$ is an injective map from $S$ to $T$, proving that $S$ is at least as good as $T$.\qed
\end{proof}

To recap, we have the following definitions:

\begin{definition}
    Let $G$ be an infinite graph with finite maximum degree and $S \subseteq V(G)$. The \emph{domination index} of a vertex $v \in V(G)$ w.r.t. $S$ is defined as $|S \cap N[v]|$. A dominating set $S$ is \emph{optimal} if for any other dominating set $T$ of $G$, there exists an injective map from $S$ to $T$. A dominating set $S$ is \emph{strongly optimal} if for every vertex $v \in V(G)$, the domination index of $v$ w.r.t. $S$ is exactly one.
\end{definition}

\section{Complexity Results for Variants of the $m$-Eternal Domination}\label{ss:cc}

In this section, we prove that the considered problems are all NP-hard, even for interesting subclasses of graphs. 

We preliminarily give a property whose analogous for the $m$-eternal vertex cover problem has been proved in \cite{KM09}.

\begin{lemma}\label{lem:connectedcore}
Let \(G\) be a graph and \(Z\) be a connected dominating set of \(G\). Then $\gamma^\infty(G)\leq |Z|+1$.
\end{lemma}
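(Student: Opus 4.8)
The plan is to exhibit an explicit defender strategy using $|Z|+1$ guards that preserves, throughout the entire game, the invariant that every vertex of $Z$ is occupied by at least one guard. Since $Z$ is a dominating set, this invariant immediately guarantees that the occupied vertices always form a dominating set of $G$, so it will suffice to show that such a configuration can be maintained against an arbitrary sequence of attacks. The connectivity of $Z$ is what makes this possible: the induced subgraph $G[Z]$ admits a spanning tree $T$, which serves as the combinatorial backbone along which guards are shuffled. For the initial configuration I would place one guard on each vertex of $Z$ and one additional ``spare'' guard on an arbitrary vertex, for a total of $|Z|+1$ guards, so that the invariant holds at the outset.

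The core of the argument is the response to an attack on a vertex $v$, which is necessarily unoccupied. I would maintain the sharper invariant that, before each attacker move, the configuration consists of exactly one guard on each vertex of $Z$ together with a single spare guard on some vertex $s$. If $s$ is adjacent to $v$, the defender simply slides the spare from $s$ to $v$, restoring the invariant with the spare now at $v$. Otherwise, since $Z$ dominates $G$, the attacked vertex $v$ has a neighbour $u \in Z$; the defender moves the guard from $u$ to $v$, opening a ``hole'' at $u$, and then refills $u$ by a cascade along $T$. Concretely, if the spare already lies in $Z$ set $z=s$, and otherwise let $z \in Z$ be a neighbour of $s$ (which exists because $Z$ dominates $s$) and move the spare from $s$ to $z$; then along the unique $T$-path $z = w_0, w_1, \dots, w_k = u$ every guard steps one vertex forward, the guard on $w_{i-1}$ moving to $w_i$. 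Because all intermediate vertices of the path begin with exactly one guard and each sends its guard forward while receiving one from behind, after the turn every vertex of $Z$ again carries exactly one guard, the spare has effectively relocated to $v$, and the invariant is re-established.

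I expect the main obstacle to be the careful bookkeeping of the spare guard's location, and in particular the case distinction according to whether the spare currently sits inside or outside $Z$: when it lies outside, one must first draw it back onto a dominating vertex $z$ and verify that this preliminary step is performed \emph{simultaneously} with the cascade without any single guard moving twice (the original guard of $z$ advances to $w_1$ while the spare enters $z$ from $s$, so each moves exactly once). I would also dispatch the degenerate cases where the $T$-path is trivial, for instance $z = u$, in which the spare is simply absorbed into the hole. Once these cases are handled and the simultaneity of moves is checked against the rules, the invariant is preserved after every defender turn, domination holds perpetually, and hence $|Z|+1$ guards suffice, yielding $\gamma^\infty(G) \le |Z|+1$.
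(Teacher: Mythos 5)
Your proposal is correct and follows essentially the same strategy as the paper's proof: keep all of $Z$ occupied plus one floating (spare) guard, and defend an attack by shifting guards one step each along a path through $Z$ from the spare's location to the attacked vertex, which exists because $Z$ is connected and dominating. The only cosmetic difference is that you route the cascade along a spanning tree of $G[Z]$ and handle the spare's position by explicit case analysis, whereas the paper simply takes any path with internal vertices in $Z$.
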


\begin{proof}
Initially, place a guard on each vertex of $Z\cup \{v\}$, where $v$ is any vertex not belonging to $Z$. We prove that this configuration is an $m$-eternal dominating set. At every step of the attacking sequence, we will maintain the following invariant: there are guards on $Z$ and one extra guard anywhere else, which we will refer to as the {\it floating guard}, positioned on vertex $v$. Whenever any vertex $v'$ that does not have a guard is attacked, let $v_0,\ldots v_k$ be any path such that $v_0=v$, $v_k=v'$ and $v_i\in Z$ for every $i= 1, \ldots, k-1$, for some $k\geq 1$. For every $i<k$, the guard on $v_i$ moves to $v_{i+1}$. After that, each vertex of $Z\cup \{v'\}$ is occupied by a guard, thus restoring both the invariant and successfully defending the attack. The size of $Z\cup \{v\}$ is $|Z|+1$ by construction, and since $G$ is finite, the size is preserved through sequences of dominating sets defending the attack sequence.\qed
\end{proof}

\begin{restatable}{theorem}{eternalDominationNPHard}
\label{thm:eternalDominationNPHard}
The \textsc{$m$-Eternal Domination} and \textsc{$m$-Eternal Connected Domination} problems are NP-hard, even on bipartite graphs of diameter $4$.   
\end{restatable}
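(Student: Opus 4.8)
The plan is to reduce from \textsc{Domination}, which is NP-hard already on bipartite graphs of diameter three \cite{MS24}, and to exploit the sandwich $\gamma(G) \le \gamma^\infty(G) \le |Z|+1$ valid for every connected dominating set $Z$ of a (finite) graph $G$: the left inequality holds because the guards occupy a dominating set at every point of the game, and the right inequality is exactly \Cref{lem:connectedcore}. Given an instance $(H,k)$ of \textsc{Domination} with $H$ bipartite of diameter three, I would build in polynomial time a bipartite graph $G$ of diameter exactly four, together with an integer $k' = k + c$ for a fixed constant $c$ coming from the construction, so that $H$ admits a dominating set of size at most $k$ if and only if $\gamma^\infty(G) \le k'$. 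The whole argument should pin $\gamma^\infty(G)$ to the value $\gamma(H)+c$.

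\textbf{Construction.} To $H$ I would attach a gadget whose role is threefold. First, it must keep $G$ bipartite and push its diameter up to four while not collapsing the domination structure of $H$; this can be arranged by adding, on each side of the bipartition, an auxiliary vertex together with short pendant paths (of length two), so that the farthest pairs of original vertices become connected through the gadget by a path of length four. Second, the gadget must \emph{pin} a fixed number $c$ of guards: by attaching leaves (or length-two pendant paths) to the auxiliary vertices, any eternal defense is forced to keep a guard permanently in the vicinity of each such leaf, which contributes the additive constant $c$ and prevents those guards from being reused to dominate the $H$-part. Third, the gadget should be designed so that a minimum dominating set of $H$, together with the pinned vertices, is simultaneously a \emph{connected} dominating set of $G$ of minimum size; I would add the auxiliary connector vertices precisely to glue an arbitrary dominating set of $H$ into a connected one in $G$ without increasing its cardinality beyond $\gamma(H)+c$.

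\textbf{Correctness.} For the forward direction, from a dominating set $D$ of $H$ with $|D| \le k$ I would produce a connected dominating set $Z$ of $G$ with $|Z| \le k + c - 1$, and then invoke \Cref{lem:connectedcore} to obtain a winning defender strategy with $|Z|+1 \le k+c = k'$ guards: the pinned guards handle every attack inside the gadget, while the floating-guard mechanism of \Cref{lem:connectedcore} handles attacks in the $H$-part. For the reverse direction, I would use $\gamma^\infty(G) \ge \gamma(G)$; any winning configuration is in particular a dominating set of $G$, which, after discounting the $c$ guards the gadget forces, restricts to a dominating set of $H$ of size at most $k' - c = k$. The same construction settles \textsc{$m$-Eternal Connected Domination}: the defender keeps the guards on the connected set $Z$ plus one floating guard that travels along a path inside $Z$, so every intermediate configuration stays connected, while the lower bound argument is unchanged.

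\textbf{Main obstacle.} The delicate point is making the sandwich tight, i.e.\ engineering the gadget so that $\gamma(G) = \gamma^\infty(G) = \gamma(H)+c$ \emph{exactly}, with no slack hiding in the $+1$ of \Cref{lem:connectedcore} or in clever guard movements that dominate $H$ with fewer than $\gamma(H)$ effective guards. Concretely, I must ensure that (i) a minimum dominating set of $G$ genuinely costs $\gamma(H)+c$, so that the pinned guards cannot double as $H$-dominators and the connectors cannot shortcut the domination of $H$, and (ii) the connected dominating set I build is of minimum size while $G$ stays bipartite of diameter four. Balancing the diameter requirement against the need to keep the domination numbers of $H$ and $G$ rigidly linked is the crux, since adding connector vertices naturally tends to shrink both the diameter and the domination number at once.
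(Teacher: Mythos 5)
Your high-level strategy coincides with the paper's: reduce from \textsc{Domination}, obtain the upper bound by applying \Cref{lem:connectedcore} to a connected dominating set of the constructed graph, and obtain the lower bound from the fact that every configuration of a winning strategy is a dominating set. But the proposal stops exactly where the proof has to start: the gadget is never specified, and the difficulties you list under ``Main obstacle'' are the actual content of the theorem rather than loose ends. Two of them are genuinely fatal to the route you sketch. First, if you work directly with a bipartite diameter-three instance and try to glue an \emph{arbitrary} dominating set of it into a connected set using a constant number of auxiliary vertices, you hit a parity problem: a minimum dominating set may contain vertices on both sides of the bipartition, and no single auxiliary vertex can be adjacent to all of them while keeping the graph bipartite; putting connectors on both sides and linking them tends to shrink the diameter and perturb the domination number at the same time, which is precisely the tension you could not resolve. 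Second, your reverse direction (``discount the $c$ pinned guards, the rest dominate the original graph'') silently assumes both that no original vertex is dominated by a gadget vertex and that the guards sitting in the gadget at the moment you inspect a configuration number exactly $c$; neither is automatic, and the second in particular requires a \emph{dynamic} argument (provoking an attack that forces guards into known positions), not just the static bound $\gamma^\infty(G)\ge \gamma(G)$.

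The paper resolves all of this with one construction that is missing from your proposal: it reduces from \textsc{Dominating Set} on \emph{general} graphs and builds the bipartite instance from scratch, taking two copies $U,W$ of $V(G)$ with $u_i$ adjacent to $w_j$ iff $v_j\in N[v_i]$, plus a vertex $w$ adjacent to all of $U$ and to $n+1$ pendant vertices $P$, with target $k+2$. The dominating set of $G$ is placed entirely inside $U$, so the single vertex $w$ connects it (no bipartite parity issue), and bipartiteness and diameter four come for free. The copy $W$ is adjacent only to $U$, so in the reverse direction the guards on $U\cup W$ \emph{must} project to a dominating set of $G$; this is exactly what blocks the ``connectors shortcut the domination'' failure you flagged. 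Finally, the $n+1$ pendants force a guard to sit on $w$ at all times, and attacking an unguarded pendant forces a second guard into $P$, which is the dynamic step that pins two guards in the gadget and leaves at most $k$ for $U\cup W$. Note also that the exact equality $\gamma(G)=\gamma^\infty(G)$ you were trying to engineer is unnecessary: the paper tolerates the $+1$ slack of \Cref{lem:connectedcore} because the pendant gadget absorbs it on the lower-bound side.
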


\begin{proof}
We describe here a reduction from the \textsc{Dominating Set} problem. Let \(\langle G = (V,E); k\rangle\) be an instance of \textsc{Dominating Set}, and let  \(V=\{v_1, \ldots, v_n\}\). Construct a bipartite graph \(H=(A \cup B, F)\) as follows: let $U=\{u_1, \ldots, u_n\}$ and $W=\{w_1, \ldots, w_n\}$ be two copies of $V$; let \(\{p_1, \ldots,\) \(p_{n+1}\}=P\) and $w$ be new vertices. We define \(A=U\cup P\) and \(B=W\cup \{w\}\). The set of edges $F$ is defined as follows: \(u_i\) is adjacent to \(w_j\) iff \((v_i,v_j) \in E\) or \(i=j\); moreover, $(w, u_i)$ and $(w, p_j)$ are both in $F$ for each $i=1, \ldots, n$ and $j=1, \ldots , n+1$.
An example of this construction is shown in~\Cref{fig:EDS}. Now we consider the instance \(\langle H;k+2\rangle\) for the \textsc{$m$-Eternal Domination} and argue the equivalence of these instances.

 \paragraph*{Forward Direction.} 
 Let \(S \subseteq V\) be a dominating set of \(G\) of size at most \(k\). Initially, place a guard on each \(u_i\) for every \(v_i \in S\). Denote this set of vertices which are occupied by guards in $A$ as $T$.  
 The set $T\cup \{w\}$ induces a connected subgraph on $H$ because every vertex in $A$ is adjacent to $w$. This set also forms a dominating set of $H$ because every vertex in $A$ is adjacent to $w$, and every vertex in $B$ has a neighbor in $T$ because $S$ is a dominating set of $G$. By~\Cref{lem:connectedcore}, $\gamma^\infty(H)\leq |T\cup \{w\}|+1=k+2$.

\paragraph*{Reverse Direction.} Suppose $H$ has a winning strategy with $k+2$ guards, and consider one of its configurations such that $P$ and the vertex $w$ have exactly one guard each: such configuration can be obtained from any other configuration after a vertex of $P$ that does not have a guard is attacked. 
Let $S$ be the subset of $V$ such that $v_i\in S$ if and only if $u_i$ or $w_i$ has a guard. Clearly, $S$ has size at most $k$ and is a dominating set for $G$. 
Indeed, suppose there exists a vertex $v_i\in V$ that has no neighbor in $S$; thus, 
$w_i$ has no guard, and has no neighbor that has a guard and that the considered configuration is not part of a winning strategy, 
a contradiction.\\
 Finally, we consider the instance \(\langle H;k+2\rangle\) for the \textsc{$m$-Eternal Connected Domination}. To prove the equivalence of these instances, we reuse the previous proof and note that all 
 configurations 
 in a winning strategy are connected.\qed 
 \end{proof}

\begin{figure}[ht]
\centering
\begin{minipage}{0.25\textwidth}
\begin{tikzpicture}[scale=0.65]
\coordinate (A) at (-1,1); \coordinate (B) at (-1,-1); \coordinate (C) at (1,-1); \coordinate (D) at (1,1); \coordinate (E) at (0,2); 
\draw[color=black](A)--(B)--(C)--(D)--(E)--(A)--(D);
\draw[fill=black] (B) circle [radius=3pt] (D) circle [radius=3pt];
\draw[fill=white] (A) circle [radius=3pt] (C) circle [radius=3pt] (E) circle [radius=3pt];
\node[left] at (A) {$v_1$}; \node[left] at (B) {$v_2$}; \node[right] at (C) {$v_3$}; \node[right] at (D) {$v_4$}; \node[above] at (E) {$v_5$};
\end{tikzpicture}
\end{minipage}
\begin{minipage}{0.5\textwidth}
\begin{tikzpicture}[scale=0.7]
\coordinate (A1) at (-2,1); \coordinate (A2) at (-1,1); \coordinate (A3) at (0,1); \coordinate (A4) at (1,1); \coordinate (A5) at (2,1); \coordinate (B1) at (-2,-1); \coordinate (B2) at (-1,-1); \coordinate (B3) at (0,-1); \coordinate (B4) at (1,-1); \coordinate (B5) at (2,-1); \coordinate (Z) at (4,-1); \coordinate (P1) at (3,1); \coordinate (P2) at (3.5,1); \coordinate (P3) at (4,1); \coordinate (P4) at (4.5,1); \coordinate (P5) at (5,1); \coordinate (P6) at (5.5,1);
\draw (P1)--(Z)--(P2)(P3)--(Z)--(P4)(P6)--(Z)--(P5)(A1)--(Z)--(A2)(A3)--(Z)--(A4)(A5)--(Z)(B2)--(A1)--(B1)--(A2)(B4)--(A1) (B1)--(A4)(B3)--(A2)--(B2)--(A3)(B4)--(A3)--(B3)--(A4)(B5)--(A3) (B3)--(A5)--(B5)(B5)--(A4)--(B4)--(A5) (A1)--(B5) (A5)--(B1)
(-2.5,0.7) rectangle (6,1.5)(-2.5,-1.5) rectangle (4.5,-0.7);
\draw[fill=black](A2) circle [radius=3pt](A4) circle [radius=3pt](Z) circle [radius=3pt](P1) circle [radius=2pt];
\draw[fill=white] (A1) circle [radius=3pt](A3) circle [radius=3pt](A5) circle [radius=3pt](B1) circle [radius=3pt](B2) circle [radius=3pt](B3) circle [radius=3pt](B4) circle [radius=3pt](B5) circle [radius=3pt](P2) circle [radius=2pt](P3) circle [radius=2pt](P4) circle [radius=2pt](P5) circle [radius=2pt](P6) circle [radius=2pt];
\node[above] at (A1) {$u_1$};\node[above] at (A2) {$u_2$};\node[above] at (A3) {$u_3$};\node[above] at (A4) {$u_4$};\node[above] at (A5) {$u_5$};\node[above] at (P1) {$p_1$};\node[above] at (P2) {$p_2$};\node[above] at (P3) {$p_3$};\node[above] at (P4) {$p_4$};\node[above] at (P5) {$p_5$};\node[above] at (P6) {$p_6$};\node[below] at (B1) {$w_1$};\node[below] at (B2) {$w_2$};\node[below] at (B3) {$w_3$};\node[below] at (B4) {$w_4$};\node[below] at (B5) {$w_5$};\node[below] at (Z) {$w$};\node[right] at (6,1.1) {$A$};\node[right] at (4.5,-1.1) {$B$};
\end{tikzpicture}
\end{minipage}
\caption{The construction for the proof of~\Cref{thm:eternalDominationNPHard}. On the left, graph $G$ for which the black vertices represent a (minimum size) dominating set. On the right, auxiliary bipartite graph $H=(A, B; F)$ for which the black vertices represent a configuration of an $m$-eternal dominating set.}
\label{fig:EDS}  
\end{figure}
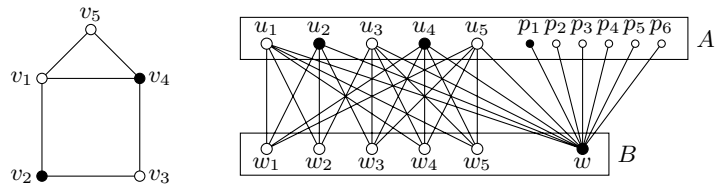

\begin{restatable}{lemma}{connectedRcore}
\label{lem:connectedRcore}%
Let \(G\) be a graph and \(Z\) be a connected dominating set of \(G\). Then $\gamma^\infty_R(G)\leq 2|Z|+1$.
\end{restatable}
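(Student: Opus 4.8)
The plan is to mimic the strategy behind \Cref{lem:connectedcore}, replacing each single guard on $Z$ by a \emph{pair} of guards, so that every vertex of the connected dominating set carries function value $2$ throughout the game. Concretely, I would start from the configuration that assigns $f(z)=2$ to every $z\in Z$, assigns $f(v)=1$ to one arbitrarily chosen vertex $v\notin Z$ (the \emph{floating guard}), and $f=0$ elsewhere. The total weight is $2|Z|+1$, matching the claimed bound, and I would first check that this is a Roman dominating function: the only vertices with value $0$ lie outside $Z$, and since $Z$ dominates $G$, each of them has a neighbour in $Z$ carrying value $2$.

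The invariant I would maintain across the whole attack sequence is exactly this shape: $f(z)=2$ for every $z\in Z$, a single floating guard of value $1$ on some vertex outside $Z$, and $0$ on every remaining vertex. The key observation is that any configuration of this shape is automatically a Roman dominating function, for the same reason as above, so it suffices to show the defender can always restore the invariant. When a vertex $v'$ is attacked it has $f(v')=0$, hence $v'\notin Z$ and $v'$ is not the floating vertex. Using that $Z$ is connected and dominating, I would pick a walk $v=v_0,v_1,\dots,v_k=v'$ with $v_1,\dots,v_{k-1}\in Z$ and $k\ge 1$ (join $v$ to a $Z$-neighbour, travel inside $Z$ to a $Z$-neighbour of $v'$, then step to $v'$), exactly as in \Cref{lem:connectedcore}. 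The defence is to slide one guard from $v_i$ to $v_{i+1}$ for each $i<k$.

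I would then verify the bookkeeping of this slide. The floating vertex $v_0$ loses its single guard and drops to value $0$; each intermediate vertex $v_i\in Z$ sends one guard forward and receives one from behind, so it keeps value $2$; and $v'$ receives one guard, becoming the new floating vertex with value $1$. Every other vertex of $Z$ is untouched and retains value $2$. Thus the invariant is restored and the attacked vertex is occupied, while no vertex ever exceeds two guards. I would also note that the moves are legal: each hop is along an edge of the chosen walk, and no guard is asked to move more than once (at an intermediate $v_i$ one of its two guards stays put while the other advances). Since $G$ is finite the weight $2|Z|+1$ is preserved throughout, yielding $\gamma^\infty_R(G)\le 2|Z|+1$.

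The argument is essentially routine once the template of \Cref{lem:connectedcore} is in hand; the step I would treat as the main obstacle is confirming that the Roman constraint holds not merely before and after a turn but is consistent with the simultaneous sliding move — that is, that doubling the guards on $Z$ keeps every $Z$-value pinned at exactly $2$ at each intermediate moment, so the doubly-guarded vertices that certify domination of the value-$0$ vertices are never momentarily depleted.
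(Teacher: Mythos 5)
Your proposal is correct and follows essentially the same route as the paper's proof: double guards on the connected dominating set $Z$, one floating guard outside $Z$, and a defence that slides guards along a path from the floating vertex through $Z$ to the attacked vertex, preserving the invariant that every vertex of $Z$ keeps value $2$. The only superfluous worry is the ``intermediate moments'' of the simultaneous slide — the game only requires the configuration to be a Roman dominating function at the end of each turn, so no further argument is needed there.
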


\begin{proof}
    Let $f(\cdot)$ be the function defined as follows: $f(u)=2$ where $u\in Z$ and $f(u)=0$ where $u\notin Z$. The function $f(\cdot)$ is a Roman dominating function for $G$ because every vertex with value $0$ is not in $Z$, but it has a neighbor in $Z$ with value $2$ since $Z$ is a dominating set. 
    
    The set of configurations of an $m$-eternal Roman dominating set is defined by functions $f_v$ (also inducing Roman dominating sets) as follows: for each $v \notin Z$, the function $f_v$ is defined as $f_v(u)=f(u)$ if $u\neq v$ and $f_v(v)=1$.
    %
    At every step of the attacking sequence, we will maintain the following invariant: there are two guards on each vertex in $Z$ and one extra guard anywhere, which we will refer to as the {\it floating guard}, positioned on vertex $v$. Whenever any vertex $v'$ which does not have a guard is attacked, let $v_0,\ldots v_k$ be any path such that $v_0=v$, $v_k=v'$ and $v_i\in Z$ for every $i= 1, \ldots, k-1$, for some $k\geq 1$. For every $i<k$, a guard on $v_i$ moves to $v_{i+1}$. After that, each vertex of $Z$ is occupied by two guards, and the vertex $v'$ is occupied by one guard, thus both restoring the invariant and successfully defending the attack. 
    
    The weight of $f_v$ is equal to $2|Z|+1$ by construction, and since $G$ is finite, the weight is preserved through the sequences of Roman dominating functions defending the sequence of attacks.
    \qed    
\end{proof}

\begin{restatable}{theorem}{eternalRomanDominationNPHard}
    \label{thm:eternalRomanDominationNPHard}
    The \textsc{$m$-Eternal Roman Domination} and \textsc{$m$-Eternal Connected Roman Domination} problems are NP-hard, even on split graphs.
\end{restatable}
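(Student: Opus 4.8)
The plan is to reduce from \textsc{Roman Domination} (or possibly from \textsc{Dominating Set}), exploiting the fact that, on split graphs, a connected dominating set lives essentially inside the clique. Recall that \textsc{Roman Domination} is already known to be NP-hard on split graphs \cite{Cal04}; the difficulty is that the \emph{eternal} version is not obviously related to the static one, so I cannot simply invoke the static hardness. Instead I would mimic the structure of the proof of \Cref{thm:eternalDominationNPHard}: build an auxiliary split graph $H$ from an instance $\langle G; k\rangle$ of the source problem, use \Cref{lem:connectedRcore} to get the ``forward'' upper bound $\gamma^\infty_R(H) \le 2|Z|+1$ from a small connected dominating set $Z$, and then force the ``reverse'' direction by engineering a gadget so that any winning configuration with few guards must hide a dominating (or Roman dominating) set of $G$.

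Concretely, I would take $G=(V,E)$ with $V=\{v_1,\dots,v_n\}$ and construct $H$ as a split graph with clique $C$ and independent set $I$. The clique $C$ would contain one vertex $c_i$ for each $v_i\in V$, making $C$ into a complete graph so that it is trivially connected. The independent set $I$ would contain ``test'' vertices $t_1,\dots,t_n$, where each $t_i$ is joined to exactly those $c_j$ with $(v_i,v_j)\in E$ or $i=j$, so that dominating $t_i$ corresponds to picking a closed neighbor of $v_i$ in $G$. To pin down the parameter and to force the clique-vertices to carry the guards (rather than the cheap independent-set vertices), I would add a family of pendant or pendant-like attack vertices attached to the clique, in the same spirit as the set $P$ and the universal vertex $w$ in \Cref{thm:eternalDominationNPHard}; attacking these repeatedly forces a ``clean'' canonical configuration in which one can read off a Roman dominating set of $G$ of the intended weight.

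For the forward direction, given a (connected) dominating set of $G$ of size $\le k$, I would lift it to a connected dominating set $Z$ of $H$ sitting inside the clique $C$, together with whatever fixed padding vertices the gadget requires; since $C$ is a clique, $Z$ is automatically connected, and \Cref{lem:connectedRcore} then yields $\gamma^\infty_R(H)\le 2|Z|+1$, giving the budget $2k+O(1)$. For the reverse direction, I would argue as in \Cref{thm:eternalDominationNPHard}: by attacking the pendant-type gadget vertices, one reaches a configuration in which those vertices hold a fixed number of guards, so the remaining guard-weight must be concentrated on $C$; reading off which clique-vertices carry value $2$ (or nonzero value) yields a Roman dominating set of $G$, because otherwise some test vertex $t_i$ would be undominated and attacking it would defeat the defender. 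The split-graph version and its connected variant are handled simultaneously, since every configuration arising in this strategy is connected (all active vertices lie in or attach to the clique $C$), exactly as in the earlier theorem.

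The main obstacle I anticipate is the reverse direction, specifically making the gadget rigid enough that a \emph{Roman} (value-$0/1/2$) configuration, under the eternal dynamics, cannot cheat: a guard of value $1$ on a test vertex could locally satisfy the Roman condition without corresponding to a choice in $G$, and guards can redistribute between turns. I would neutralize this by calibrating the pendant gadget (e.g.\ using $k+1$ or more forced attack vertices, as $P$ does above) so that the budget leaves no room for value-$1$ vertices outside the canonical positions, forcing every relevant test vertex to be dominated by a value-$2$ clique vertex; verifying that this budget accounting is tight and survives the eternal attack sequence—rather than just a single snapshot—is the delicate part of the argument.
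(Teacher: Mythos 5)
Your overall skeleton matches the paper's: a reduction from \textsc{Dominating Set} to a split graph whose clique is a copy of $V(G)$, with the forward direction handled by lifting a dominating set of $G$ into the clique and invoking \Cref{lem:connectedRcore} to get the budget $2k+1$. That part is fine. The gap is in the reverse direction, and it is exactly the point you yourself flag as ``the delicate part'': you never actually supply the mechanism that prevents the defender from cheating with value-$1$ guards on the test vertices. Your proposed fix --- porting the pendant set $P$ and the universal vertex $w$ from \Cref{thm:eternalDominationNPHard} and ``calibrating'' the budget with $k+1$ forced attack vertices --- does not obviously work: pendants attached to the clique are all covered by a single value-$2$ clique vertex and impose no constraint on how guards are distributed over the $n$ test vertices, so a configuration could still place value $1$ on several test vertices and break the correspondence between weight $2k+1$ and a size-$k$ dominating set of $G$. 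Moreover, an argument that relies on driving the game into a canonical configuration via a sequence of attacks is harder to make airtight than a static one.

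The paper's key idea, which your proposal is missing, is replication plus pigeonhole: the independent side $B$ consists of $2n+2$ disjoint copies $W^{(1)},\dots,W^{(2n+2)}$ of $V$, each $w_i^{(j)}$ adjacent to $u_\ell$ exactly when $v_\ell\in N[v_i]$ in $G$. Since any configuration in a winning strategy is a Roman dominating function of weight $2k+1\le 2n+1<2n+2$, some copy $W^{(j)}$ carries only zeros; every vertex of that copy must then have a value-$2$ neighbor, and all its neighbors lie in $A$, so the value-$2$ vertices of $A$ (at most $k$ of them) project to a dominating set of $G$. This is a one-shot argument on a single snapshot, needs no attack sequence, and is precisely what neutralizes the value-$1$ cheating you were worried about. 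Without this (or an equivalent rigidity device), your reverse direction does not go through.
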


\begin{proof}
We describe here a reduction from the \textsc{Dominating Set} problem. Let \(\langle G = (V,E); k\rangle\) be an instance of \textsc{Dominating Set} and let \(V=\{v_1, \ldots, v_n\} \). Construct a split graph \(H=(A \cup B, F)\) as follows: let $A=\{u_1, \ldots, u_n\}$ be a copy of $V$ and let \(B = \{w_i^{(j)}\}\) for \(1 \leqslant i \leqslant n\) and \(1 \leqslant j \leqslant 2n+2\) contain \(2n+2\) copies of \(V\). The edge set $F$ is defined as follows: \(A\) induces a clique and \(B\) induces an independent set; moreover, \(u_i\) is adjacent to \(w_k^{(j)}\) for all \(j\) iff \((v_i,v_k) \in E\) and \(u_i\) is adjacent to \(w_i^{(j)}\) for all \(i,j\).
An example of this construction is shown in~\Cref{fig:RDS}. Now we consider the instance \(\langle H; 2k+1\rangle\) for the \textsc{$m$-Eternal Roman Domination} and argue on the equivalence of these instances.    

\paragraph{Forward Direction.} Let \(S \subseteq V\) be a dominating set of size at most \(k\). Let $T$ be the set of vertices in $A$ that corresponds to $S$, {\em i.e}. $u_i\in T$ if and only if $v_i\in S$. The set $T$ induces a connected subgraph because it is a subset of $A$, which induces a clique in $H$. By construction, $T$ is also a dominating set of $H$, because $S$ is a dominating set of $G$.
By~\Cref{lem:connectedRcore}, $\gamma^\infty_R(H)\leq 2|T|+1\leq 2k+1$.

\paragraph{Reverse Direction.} Suppose that $H$ has a winning strategy with at most \(2k+1\) guards, and consider an initial configuration induced by a Roman dominating function $f$ of $H$ such that the weight is equal to $2k+1$. For every \(1 \leqslant j \leqslant 2n+2\), let \(W^{(j)} := \{w_i^{(j)}~|~ 1 \leqslant i \leqslant n\}.\) Moreover, let $S\subseteq A$ be the set of vertices in \(A\) that have value two each, {\em i.e.}, $s \in S$ if and only if $f(s)=2$. By the pigeon-hole principle, there exists a \(1 \leqslant j \leqslant 2n+2\) such that \(W^{(j)}\) contains only zero value vertices, {\em i.e.}, \(f(w_i^{(j)})=0\) for every \(i \leqslant n\). Since \(W^{(j)}\) only has neighbors in \(A\), set \(S\) must be a dominating set of \(G\), and this means that \(G\) has a dominating set \(S\) of size at most \(k\).\\
Finally, we consider the instance \(\langle H;2k+1\rangle\) for the \textsc{$m$-Eternal Connected Roman Domination}. To prove the equivalence of these instances, we reuse the previous proof and note that all considered configurations involved in a winning strategy are connected.\qed
\end{proof}

\begin{figure}[ht]
\centering
\begin{tikzpicture}[scale=0.6]
\coordinate (A1) at (-2,1); \coordinate (A2) at (-1,1); \coordinate (A3) at (0,1); \coordinate (A4) at (1,1); \coordinate (A5) at (2,1); \coordinate (B1) at (-2,-1); \coordinate (B2) at (-1,-1); \coordinate (B3) at (0,-1); \coordinate (B4) at (1,-1); \coordinate (B5) at (2,-1); \coordinate (Z) at (4,-1); \coordinate (C1) at (3.5,-1); \coordinate (C2) at (4.5,-1); \coordinate (C3) at (5.5,-1); \coordinate (C4) at (6.5,-1); \coordinate (C5) at (7.5,-1); 
\coordinate (D1) at (10,-1); \coordinate (D2) at (11,-1); \coordinate (D3) at (12,-1); \coordinate (D4) at (13,-1); \coordinate (D5) at (14,-1); 
\draw[dashed] (2.3,0.7)--(4,-0.7)(2.5,0.9)--(10.4,-0.7);
\draw (A1)--(B1)--(A2)(B4)--(A1) (B1)--(A4)(B3)--(A2)--(B2)--(A3)(B4)--(A3)--(B3)--(A4)(B5)--(A3) (B3)--(A5)--(B5)(B5)--(A4)--(B4)--(A5) (A1)--(B5) (A5)--(B1)
(-2.5,0.7) rectangle (2.5,1.7)(-2.5,-1.9) rectangle (2.5,-0.7)(3,-1.9) rectangle (8,-0.7)(9.4,-1.9) rectangle (14.8,-0.7);
\draw[fill=black](A2) circle [radius=3pt](A4) circle [radius=3pt];
\draw[fill=gray] (B1) circle [radius=3pt];
\draw[fill=white] (A1) circle [radius=3pt](A3) circle [radius=3pt](A5) circle [radius=3pt](B2) circle [radius=3pt](B3) circle [radius=3pt](B4) circle [radius=3pt](B5) circle [radius=3pt] (C1) circle [radius=3pt](C2) circle [radius=3pt](C3) circle [radius=3pt](C4) circle [radius=3pt](C5) circle [radius=3pt] circle [radius=3pt] (D1) circle [radius=3pt](D2) circle [radius=3pt](D3) circle [radius=3pt](D4) circle [radius=3pt](D5) circle [radius=3pt];
\node[above] at (A1) {$u_1$};\node[above] at (A2) {$u_2$};\node[above] at (A3) {$u_3$};\node[above] at (A4) {$u_4$};\node[above] at (A5) {$u_5$};\node[below] at (B1) {$w^{(1)}_1$};\node[below] at (B2) {$w^{(1)}_2$};\node[below] at (B3) {$w^{(1)}_3$};\node[below] at (B4) {$w^{(1)}_4$};\node[below] at (B5) {$w^{(1)}_5$};
\node[below] at (C1) {$w^{(2)}_1$};\node[below] at (C2) {$w^{(2)}_2$};\node[below] at (C3) {$w^{(2)}_3$};\node[below] at (C4) {$w^{(2)}_4$};\node[below] at (C5) 
{$w^{(2)}_5$};
\node[below] at (D1) {$w^{(12)}_1$};\node[below] at (D2) {$w^{(12)}_2$};\node[below] at (D3) {$w^{(12)}_3$};\node[below] at (D4) {$w^{(12)}_4$};\node[below] at (D5) 
{$w^{(12)}_5$};
\node[right] at (2.5,1.3) {$A$}; \node[right] at (14.7,-1.1) {$B$}; 
\node[above] at (-2.6,-.7) {$W^{(1)}$};
\node[above] at (5.5,-0.7) {$W^{(2)}$};
\node[above] at (12.1,-0.7) {$W^{(12)}$};
\node at (8.7,-1.1){$\cdots$};
\end{tikzpicture}
\caption{The construction for the proof of ~\Cref{thm:eternalRomanDominationNPHard} for graph $G$ of~\Cref{fig:EDS}. 
In the auxiliary split graph $H=(A,B;F)$, the edges between $A$ and $W^{(1)}$ are shown in the figure, while the edges between $A$ and every other $W^{(j)}$, $j\in [2,12]$, are  dashed lines and those within $A$ are omitted. 
The color of the vertices represents a configuration of an $m$-eternal Roman dominating function of $H$: two, one, and zero guards are placed on black, grey, and white vertices, respectively.
}
\label{fig:RDS}  
\end{figure}
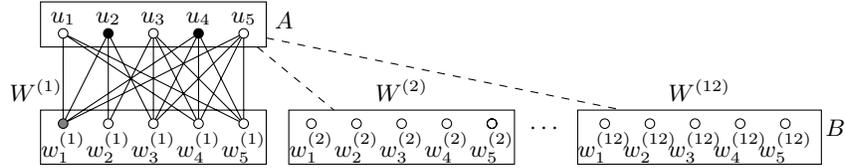

The proof of our next hardness result uses a construction in the same style as the one used for proving~\Cref{thm:eternalRomanDominationNPHard}. 

\begin{restatable}{lemma}{connectedIcore}
\label{lem:connectedIcore}%
    Let $G$ be a graph and $f$ be an Italian dominating function of $G$ such that the vertices with non-zero value induce a connected subgraph; moreover, let $t_f$ denote the weight of $f$. 
    Then $\gamma^\infty_I(G)\leq t_f+1$.
\end{restatable}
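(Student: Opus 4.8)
The plan is to follow the same template used for \Cref{lem:connectedcore} and \Cref{lem:connectedRcore}. I would set $D=\{v\mid f(v)\neq 0\}$, which by hypothesis induces a connected subgraph, and take as the initial configuration the one that places $f(u)$ guards on every vertex $u$ together with one additional \emph{floating guard} on some vertex outside $D$. This configuration has weight $t_f+1$ and, since $f$ is an Italian dominating function, adding one extra guard cannot destroy the Italian property, so it is a valid starting configuration. Throughout the game I would maintain the invariant that the current configuration is exactly $f$ plus one floating guard sitting on a vertex $v$ with $f(v)=0$.

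To defend an attack on an (unguarded, hence zero-valued) vertex $v'$, I would route the floating guard to $v'$ along a path $v_0=v,v_1,\dots,v_k=v'$ whose interior vertices $v_1,\dots,v_{k-1}$ all lie in $D$, moving, for each $i<k$, one guard from $v_i$ to $v_{i+1}$. Because every interior vertex $v_i\in D$ simultaneously sends one guard forward and receives one from its predecessor, its guard count returns to $f(v_i)\le 2$; meanwhile the floating guard leaves $v$ and a single guard arrives on $v'$. Hence after the turn the configuration is again $f$ together with a floating guard, now positioned on $v'$ (which has $f(v')=0$), so the invariant is restored, the attacked vertex is occupied, no vertex ever holds more than two guards, and — since $f$ is Italian dominating and only extra guards have been added — the resulting configuration is a valid Italian dominating function.

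The step I expect to be the crux is guaranteeing that such a path always exists, which is exactly where the argument departs from the Roman case. There the core $Z$ was literally a dominating set, so every vertex outside it had a neighbor in $Z$ for free. Here $D$ need not be a dominating set in the ordinary sense, so I would prove separately that every vertex $x$ with $f(x)=0$ still has a neighbor in $D$: the Italian condition forces $\sum_{u\in N(x)}f(u)\ge 2>0$, so some neighbor of $x$ has nonzero value and thus lies in $D$. Applying this to both the floating-guard vertex $v$ and the attacked vertex $v'$, and then using the connectivity of $D$ to join their respective $D$-neighbors by a path inside $D$, yields the required route. Finally, as $G$ is finite the weight $t_f+1$ is preserved along every sequence of attacks, giving $\gamma^\infty_I(G)\le t_f+1$.
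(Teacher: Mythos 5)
Your proposal is correct and follows essentially the same route as the paper's proof: maintain the configuration $f$ plus one floating guard on a zero-valued vertex, and route that guard to the attacked vertex along a path whose interior lies in $D=\{v \mid f(v)\neq 0\}$. The only difference is that you explicitly justify the existence of such a path (every zero-valued vertex has a neighbor in $D$ by the Italian condition, and $D$ is connected), a step the paper leaves implicit -- indeed its proof still refers to a set $Z$ inherited from the Roman-domination lemma -- so your write-up is, if anything, slightly more complete.
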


\begin{proof}
For every $v\in V$ such that $f(v)=0$, define the function $f_v$ as follows: $f_v(u)=f(u)$ if $u\neq v$ and $f_v(v)=1$.
Clearly, $f_v$ is an Italian dominating function of $G$ such that the vertices with non-zero values induce a connected subgraph, and the weight of $f_v$ is equal to $t_f+1$. 

At every step of the attacking sequence, we will maintain the following invariant: there are two guards on each vertex $u$ such that $f(u)=2$, one guard on each vertex $u$ such that $f(u)=1$ and one extra guard anywhere, which we will refer to as the {\it floating guard}, positioned on vertex $v$. 
Whenever any vertex $v'$ which does not have a guard is attacked, let $v_0,\ldots, v_k$ be any path such that $v_0=v$, $v_k=v'$ and $v_i\in Z$ for every $i= 1, \ldots, k-1$, for some $k\geq 1$. 
For every $i<k$, a guard on $v_i$ moves to $v_{i+1}$. After that, each vertex with a non-zero value of $f$ has the same number of guards as before the attack, and the floating guard is at vertex $v'$, thus both restoring the invariant and successfully defending the attack.

The weight of $f_v$ is equal to $t_f+1$ by construction and, since $G$ is finite, the weight is preserved through the sequences of Italian dominating functions defending the sequence of attacks.\qed  
\end{proof}

\begin{restatable}{theorem}{eternalItalianDominationNPHard}
    \label{thm:eternalItalianDominationNPHard}
     The \textsc{$m$-Eternal Italian Domination} and \textsc{$m$-Eternal Connected Italian Domination} problems are NP-hard, even on split graphs.
\end{restatable}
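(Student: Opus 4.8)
The plan is to mirror the structure of the proof of \Cref{thm:eternalRomanDominationNPHard}, reducing from \textsc{Dominating Set} and reusing essentially the same split-graph construction, but with a budget that reflects the Italian setting. Given an instance $\langle G=(V,E);k\rangle$ with $V=\{v_1,\ldots,v_n\}$, I would build the split graph $H=(A\cup B,F)$ where $A=\{u_1,\ldots,u_n\}$ is a clique copy of $V$, and $B=\{w_i^{(j)}\}$ consists of sufficiently many (say $2n+2$) independent copies of $V$, with $u_i$ adjacent to $w_k^{(j)}$ for all $j$ iff $(v_i,v_k)\in E$, and $u_i$ adjacent to $w_i^{(j)}$ for all $j$. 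The target instance is then $\langle H;2k+1\rangle$ for \textsc{$m$-Eternal Italian Domination}.

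For the forward direction, I would take a dominating set $S\subseteq V$ of size at most $k$, let $T\subseteq A$ be the corresponding clique vertices, and define the Italian dominating function $f$ assigning value $2$ to each vertex of $T$ and $0$ elsewhere. Since $T$ lies inside the clique $A$ it induces a connected subgraph, and since $S$ dominates $G$ every zero-valued vertex of $B$ has a neighbor in $T$ of value $2$, so the neighbor-sum condition ($\geq 2$) is met; vertices of $A$ outside $T$ are likewise dominated by any value-$2$ clique vertex. Thus $f$ is a connected Italian dominating function of weight $2|T|\leq 2k$, and invoking \Cref{lem:connectedIcore} gives $\gamma^\infty_I(H)\leq t_f+1\leq 2k+1$, which simultaneously handles the connected variant since all configurations produced by that lemma keep the non-zero vertices connected (the floating guard sits on a vertex adjacent to $A$, preserving connectivity).

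For the reverse direction, I would suppose $H$ admits a winning strategy with at most $2k+1$ guards and fix any configuration realized by an Italian dominating function $f$ of weight at most $2k+1$. Writing $W^{(j)}:=\{w_i^{(j)}\mid 1\leq i\leq n\}$, I would argue by a pigeonhole/counting argument that, because the total weight is bounded by $2k+1<2n+2$ and there are $2n+2$ blocks, some block $W^{(j)}$ carries zero total weight, i.e.\ $f(w_i^{(j)})=0$ for all $i$. Each such $w_i^{(j)}$ then needs neighbor-sum at least $2$, and its only neighbors lie in $A$; letting $S\subseteq V$ be the projection of the positively-valued vertices of $A$ back to $G$, the Italian condition forces every $v_i$ to have $f$-weight at least $2$ concentrated on $N[v_i]\cap A$, so $S$ dominates $G$. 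Since the $A$-part of $f$ has weight at most $2k+1$, a small argument bounds $|S|\leq k$ (here the odd budget and the distinction between weight and cardinality must be handled carefully, possibly by noting a value-$1$ clique vertex still witnesses domination).

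The main obstacle I anticipate is the reverse-direction cardinality bookkeeping: unlike Roman domination, where a zero vertex demands a single value-$2$ neighbor, the Italian condition can be satisfied by \emph{two} distinct value-$1$ neighbors, so the positively-valued clique vertices need not all have value $2$. I must verify that the set of clique vertices carrying positive value still forms a dominating set of $G$ of size at most $k$ under the weight bound $2k+1$, rather than merely bounding the total weight. The cleanest route is to define $S$ as the clique vertices with \emph{positive} value and show that the Italian constraint on the empty block $W^{(j)}$ forces each $v_i$ to be dominated by $S$, then bound $|S|$ using that each contributes at least $1$ to a total weight of at most $2k+1$; the parity and the ``$+1$'' floating guard require the counting to be tight, and this is the step I would check most carefully.
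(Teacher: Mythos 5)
Your forward direction is sound, but the reverse direction contains a genuine gap that you have correctly located and cannot repair within this framework: reducing from \textsc{Dominating Set} with budget $2k+1$ does not work, because the Italian condition decouples weight from cardinality. From a weight-$(2k+1)$ Italian dominating function of $H$ with an all-zero block $W^{(j)}$, the set $S$ of positively-valued clique vertices is indeed a dominating set of $G$, but each element of $S$ may carry value $1$, so all you can conclude is $|S|\leq 2k+1$, not $|S|\leq k$. This is not loose bookkeeping; the claimed equivalence is actually false. Take $G$ to be the disjoint union of two copies of $C_4$ and $k=3$: then $\gamma(G)=4>k$, yet assigning value $1$ to three consecutive vertices of each $C_4$ gives a function of weight $6$ whose lift to the clique $A$ is an Italian dominating function of $H$ (every $w_i^{(j)}$ sees total weight at least $2$ on its neighborhood, and every zero-valued clique vertex sees weight $6$) with connected support, so \Cref{lem:connectedIcore} yields $\gamma^\infty_I(H)\leq 7=2k+1$. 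Your reduction therefore maps a NO-instance of \textsc{Dominating Set} to a YES-instance, and no choice of $S$ (value-$2$ vertices only, positive vertices, etc.) closes the factor-two gap between the weight of an Italian dominating function and the size of a dominating set.

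The paper sidesteps this entirely by reducing from \textsc{Italian Domination} (NP-hard on split graphs) rather than from \textsc{Dominating Set}: the instance is $\langle H;k+1\rangle$ where $k$ is the target Italian domination \emph{weight} of $G$, the set $B$ consists of $n+2$ copies of $V$, and in the reverse direction the pigeonhole argument produces an all-zero block $W^{(j)}$, whereupon the restriction $f_G(v_i):=f(u_i)$ is itself an Italian dominating function of $G$ of weight at most $k$ --- no conversion from weight to cardinality is ever needed. If you want to keep your overall architecture (which otherwise mirrors the paper's), switching the source problem to \textsc{Italian Domination} is the fix.
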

\begin{proof}
  We describe here a reduction from the \textsc{Italian Domination} problem. The construction is identical to the one used in the proof of~\Cref{thm:eternalRomanDominationNPHard}. The justification for the equivalence of instances is also analogous, but we make an explicit argument here for the sake of completeness. 

    Let \(\langle G = (V,E); k+1\rangle\) be an instance of \textsc{Italian Dominating Set} and let $V=\{v_1, \ldots, v_n\}$. Construct a split graph $H=(A\cup B,F)$ as follows: let $A=\{u_1,\ldots, u_n\}$ be a copy of $V$ and let $B=\{w_i^{(j)}$ for $1\leq i\leq n$ and $1\leq j\leq n+2$ contain $n+2$ copies of $V$. The edge set $F$ is defined as follows: \(A\) induces a clique and \(B\) induces an independent set; moreover, \(u_i\) is adjacent to \(w_k^{(j)}\) for all \(j\) iff \((v_i,v_k) \in E\) and \(u_i\) is adjacent to \(w_i^{(j)}\) for all \(i,j\).

Now consider the instance \(\langle H; k+1\rangle\) for the \textsc{$m$-Eternal Italian Domination Number}. We now argue the equivalence of these instances.

\paragraph{Forward Direction.} Let $f$ be an Italian dominating function of $G$ of weight \(k\). 
Let $T$ be the set of vertices in $A$ that corresponds to non-zero values of $f$, i.e. $u_i\in T$ if and only if $f(v_i)\neq 0$. The set $T$ induces a connected subgraph because it is a subset of $A$, which induces a clique in $H$.
By~\Cref{lem:connectedIcore}, $\gamma^\infty_I(H)\leq k+1$.

\paragraph{Reverse Direction.}
    Suppose that the $m$-eternal Italiam dominating number of $H$ is at most \(k+1\) and consider an initial configuration of $k+1$ guards such that $B$ contains at least one guard, that is an Italian dominating function $f$ of $H$ of weight equal to $k+1$. 
    By the pigeon-hole principle, there exists a \(1 \leqslant j \leqslant n+2\) such that \(W^{(j)}\) contains only zero value vertices, i.e., \(f(w_i^{(j)})=0\) for every \(i \leqslant n\). Let $f_G$ be the function defined on the vertices of $G$ as follows: $f_G(v_i)=f(u_i)$. Since \(W^{(j)}\) only has neighbors in \(A\), then $f_G$ is an Italian dominating function of $G$ of value at most $k$.
    
    Finally, we consider the instance \(\langle H;k+1\rangle\) for the \textsc{$m$-Eternal Connected Italian Domination}. To prove the equivalence of these instances, we re-use the previous proof and note that all considered configurations involved in a winning strategy are connected.\qed   
\end{proof}

\section{$m$-Eternal Domination on Infinite Regular Grids}
\label{ss:inf}

In this section, we provide a starting configuration and a strategy for the $m$-domination of infinite regular grids. We show that in all cases, we are able to obtain strongly optimal dominating sets that are also $m$-eternal dominating sets. We start with infinite square grids first. 

\begin{restatable}[$m$-Eternal Domination on the infinite Square Grid]{theorem}{squareGrid}
\label{thm:squareGrid}
There is a strongly optimal dominating set for the infinite square grid $T_4$ that is also an $m$-eternal dominating set. 
\end{restatable}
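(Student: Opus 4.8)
The plan is to exhibit the classical perfect code of the square grid and then show that the family of its lattice translates is preserved under a single well-chosen guard move, so that the defender can answer any attack while keeping a perfect code (equivalently, a strongly optimal dominating set) at every step.

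First I would take $S = \{(x,y) \in \mathbb{Z}^2 : 2x + y \equiv 0 \pmod 5\}$ and verify it is strongly optimal. For any vertex $(x_0,y_0)$, writing $t = 2x_0+y_0$, the five values of $2x+y \bmod 5$ over $N[(x_0,y_0)]$ are $t$, $t\pm 1$, $t\pm 2 \pmod 5$, i.e. all five residues modulo $5$; hence exactly one vertex of each closed neighborhood lies in $S$. Therefore every domination index equals one, $S$ is a strongly optimal (perfect) dominating set, and the cross-shaped sets $N[s]$, $s\in S$, tile $\mathbb{Z}^2$ with one guard per tile.

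Next I would describe the eternal strategy. Place one guard on each vertex of $S$ and maintain the invariant that the occupied set is always a translate $S + c$ of $S$ by an integer vector $c$; since every such translate is again one of the five residue classes modulo $5$, it is again a perfect code, hence strongly optimal. To answer an attack on a vertex $v$: as $v$ is unoccupied we have $v \notin S+c$, so by the perfect-code property $v$ has a unique neighbour $s \in S+c$; set $d = v - s$, a unit vector. The defender then translates every guard by $d$, moving the guard at $p$ to $p + d$. This is legal because each guard steps to an adjacent vertex and $p \mapsto p+d$ is a bijection (no two guards collide); it places a guard on $v = s + d$ while turning $S+c$ into $S + (c+d)$, which is the residue class $2c_x + c_y + 2d_x + d_y \bmod 5$, again a perfect code. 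Thus the invariant, the domination-index-one property, and the defence of the attack are all restored simultaneously, and the induction closes.

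The main conceptual point, and the step I expect to need the most care, is justifying this global unit shift. Unlike finite grids, a purely local response (moving only the guard of the attacked cross) pushes that guard off-centre and breaks the coverage of its cross, forcing an unbounded cascade of corrections. The clean resolution is to let \emph{every} guard move one step in the common direction $d$, which is permissible in the $m$-eternal model for infinite graphs (arbitrarily many guards may move, each to a neighbour) and keeps the entire perfect-code partition intact, merely translated. I would state explicitly that this uniform shift never raises any domination index above one, so the configurations remain strongly optimal at every moment rather than merely dominating, which is exactly the strengthening the theorem asks for.
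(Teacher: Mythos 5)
Your proposal is correct and follows essentially the same route as the paper: the congruence class $2x+y\equiv 0 \pmod 5$ is exactly the lattice the paper generates from $(0,0)$ by the steps $(\pm 2,\pm 1)$ and $(\mp 1,\pm 2)$, and the defence by a global unit translation of all guards toward the attacked vertex is precisely the paper's strategy. Your residue-based check that every closed neighborhood meets the set exactly once is a slightly cleaner verification than the paper's row-by-row argument, but the underlying construction and invariant are identical.
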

\begin{proof}
We define vertex set $S_4$ on the infinite square grid $T_4$ as follows: $(0,0)\in S_4$; moreover, if $(x,y)\in S_4$ then also $(x+2,y+1)$, $(x-1,y+2)$, $(x-2,y-1)$ and $(x+1,y-2)$ belong to $S_4$. This vertex set can be visualized in the left graph in~\Cref{fig:T48}. 

Note that the infinite set of closed neighborhoods $N[s]$ with $s \in S_4$ forms a partition of the vertices of $T_4$; therefore, $S_4$ is a dominating set of $T_4$ and we give a formal proof below.

Observe that for any guard at $(x,y)$ there is a guard at $(x', y')=(x+2,y+1)$, a guard at $(x'',y'')=(x'+1,y'-2)=(x+3, y-1)$ and a guard at $(x''', y''')=(x''+2,y''+1)=(x+5,y)$; hence, looking at the rows of $T_4$, for each guard at $(x,y)$, the four vertices to its right do not contain any guard. 
Nevertheless, $(x+1,y)$ is dominated by the guard at $(x,y)$; $(x+2, y)$ is dominated by the guard at $(x', y')=(x+2, y+1)$; $(x+3, y)$ is dominated by the guard at $(x'',y'')=(x+3, y-1)$; $(x+4, y)$ is dominated by $(x''', y''')=(x+5, y)$.
The generality of the reasoning proves that
$S_4$ is a dominating set of $T_4$.

Moreover, by construction, it holds $|N[v]\cap S_4|=1$ for every $v\in V$: this means that $S_4$ is strongly optimal. Next, we show that $S_4$ is a configuration of an $m$-eternal dominating set of $T_4$. Indeed, consider an attack on a vertex $v\in V\setminus S_4$. Since $S_4$ is a dominating set and, by construction, there exists a unique vertex $s\in S_4$ that is a neighbor of $v$. It is not restrictive to consider the case $s=(i^*,j^*)$ and $v=(i^*+1,j^*)$ (the other three cases are analogous). For every $(i,j)\in S_4$ (including $(i^*,j^*)$), the guard in $(i,j)$ moves to $(i+1,j)$. The new position of the guards is a translation of $S_4$ by one unit in the same direction, and thus still forms a dominating set. Therefore, with this strategy, the guards move along configurations of an $m$-eternal dominating set.
\qed
\end{proof}

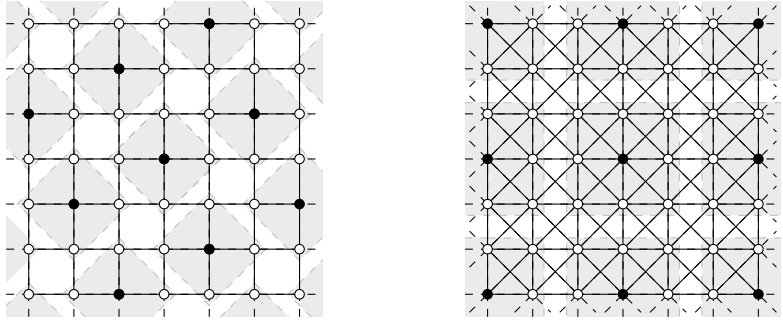
\begin{figure}[ht]
\centering
\hspace*{6mm}
\begin{minipage}{0.5\textwidth}
\begin{tikzpicture}[scale=0.6]
\clip (-3.5,-3.5) rectangle (3.5,3.5);
\fill[lightgray!32] (0,1.25) -- (1.25,0) -- (0,-1.25)-- (-1.25,0) -- cycle;
\draw[dashed, lightgray] (0,1.25) -- (1.25,0) -- (0,-1.25)-- (-1.25,0) -- cycle;
\fill[lightgray!32] (2,2.25) -- (3.25,1) -- (2,-.25)-- (.75,1) -- cycle;
\draw[dashed, lightgray] (2,2.25) -- (3.25,1) -- (2,-.25)-- (.75,1) -- cycle;
\fill[lightgray!32] (-1,3.25) -- (.25,2) -- (-1,.75)-- (-2.25,2) -- cycle;
\draw[dashed, lightgray](-1,3.25) -- (.25,2) -- (-1,.75)-- (-2.25,2) -- cycle;
\fill[lightgray!32] (-2,.25) -- (-.75,-1) -- (-2,-2.25)-- (-3.25,-1) -- cycle;
\draw[dashed, lightgray] (-2,.25) -- (-.75,-1) -- (-2,-2.25)-- (-3.25,-1) -- cycle;
\fill[lightgray!32] (1,-.75) -- (2.25,-2) -- (1,-3.25)-- (-.25,-2) -- cycle;
\draw[dashed, lightgray](1,-.75) -- (2.25,-2) -- (1,-3.25)-- (-.25,-2) -- cycle;
\fill[lightgray!32] (-3,2.25) -- (-1.75,1) -- (-3,-.25)-- (-4.25,1) -- cycle;
\draw[dashed, lightgray] (-3,2.25) -- (-1.75,1) -- (-3,-.25)-- (-4.25,1) -- cycle;
\fill[lightgray!32] (3,.25) -- (4.25,-1) -- (3,-2.25)-- (1.75,-1) -- cycle;
\draw[dashed, lightgray] (3,.25) -- (4.25,-1) -- (3,-2.25)-- (1.75,-1) -- cycle;
\fill[lightgray!32] (1,4.25) -- (2.25,3) -- (1,1.75)-- (-.25,3) -- cycle;
\draw[dashed, lightgray] (1,4.25) -- (2.25,3) -- (1,1.75)-- (-.25,3) -- cycle;
\fill[lightgray!32] (-1,-1.75) -- (.25,-3) -- (-1,-4.25)-- (-2.25,-3) -- cycle;
\draw[dashed, lightgray]  (-1,-1.75) -- (.25,-3) -- (-1,-4.25)-- (-2.25,-3) -- cycle;
\fill[lightgray!32] (4,3.25) -- (5.25,2) -- (4,.75)-- (2.75,2) -- cycle;
\draw[dashed, lightgray](4,3.25) -- (5.25,2) -- (4,.75)-- (2.75,2) -- cycle;
\fill[lightgray!32] (3,5.25) -- (4.25,4) -- (3,2.75)-- (1.75,4) -- cycle;
\draw[dashed, lightgray] (3,5.25) -- (4.25,4) -- (3,2.75)-- (1.75,4) -- cycle;
\fill[lightgray!32] (2,-2.75) -- (3.25,-4) -- (2,-5.25)-- (.75,-4) -- cycle;
\draw[dashed, lightgray] (2,-2.75) -- (3.25,-4) -- (2,-5.25)-- (.75,-4) -- cycle;
\fill[lightgray!32] (4,-1.75) -- (5.25,-3) -- (4,-4.25)-- (2.75,-3) -- cycle;
\draw[dashed, lightgray] (4,-1.75) -- (5.25,-3) -- (4,-4.25)-- (2.75,-3) -- cycle;
\fill[lightgray!32] (-4,-.75) -- (-2.75,-2) -- (-4,-3.25)-- (-5.25,-2) -- cycle;
\draw[dashed, lightgray] (0,1.25) -- (1.25,0) -- (0,-1.25)-- (-1.25,0) -- cycle;
\fill[lightgray!32] (-3,-2.75) -- (-1.75,-4) -- (-3,-5.25)-- (-4.25,-4) -- cycle;
\draw[dashed, lightgray]  (-3,-2.75) -- (-1.75,-4) -- (-3,-5.25)-- (-4.25,-4) -- cycle;
\fill[lightgray!32] (-2,5.25) -- (-.75,4) -- (-2,2.75)-- (-3.25,4) -- cycle;
\draw[dashed, lightgray] (-2,5.25) -- (-.75,4) -- (-2,2.75)-- (-3.25,4) -- cycle;
\fill[lightgray!32] (-4,4.25) -- (-2.75,3) -- (-4,1.75)-- (-5.25,3) -- cycle;
\draw[dashed, lightgray] (-4,4.25) -- (-2.75,3) -- (-4,1.75)-- (-5.25,3) -- cycle;
\draw(-3,-3)--(-3,3)(-2,-3)--(-2,3)(-1,-3)--(-1,3)(0,-3)--(0,3)(1,-3)--(1,3)(2,-3)--(2,3)(3,-3)--(3,3)(-3,-3)--(3,-3)(-3,-2)--(3,-2)(-3,-1)--(3,-1)(-3,0)--(3,0)(-3,1)--(3,1)(-3,2)--(3,2)(-3,3)--(3,3);
\draw[dashed] (-3,-3.5)--(-3,3.5)(-2,-3.5)--(-2,3.5)(-1,-3.5)--(-1,3.5)(0,-3.5)--(0,3.5)(1,-3.5)--(1,3.5)(2,-3.5)--(2,3.5)(3,-3.5)--(3,3.5)(-3.5,-3)--(3.5,-3)(-3.5,-2)--(3.5,-2)(-3.5,-1)--(3.5,-1)(-3.5,0)--(3.5,0)(-3.5,1)--(3.5,1)(-3.5,2)--(3.5,2)(-3.5,3)--(3.5,3);
\draw[fill=white]
(-3,-3)circle [radius=3pt]
(-3,-2) circle [radius=3pt]
(-3,-1) circle [radius=3pt]
(-3,0) circle [radius=3pt]
(-3,2) circle [radius=3pt]
(-3,3)circle [radius=3pt]
(-2,-3) circle [radius=3pt]
(-2,-2) circle [radius=3pt]
(-2,0) circle [radius=3pt]
(-2,1) circle [radius=3pt]
(-2,2) circle [radius=3pt]
(-2,3) circle [radius=3pt]
(-1,-3) circle [radius=3pt]
(-1,-2) circle [radius=3pt]
(-1,-1) circle [radius=3pt]
(-1,0) circle [radius=3pt]
(-1,1) circle [radius=3pt]
(-1,2) circle [radius=3pt]
(-1,3) circle [radius=3pt]
(0,-3) circle [radius=3pt]
(0,-2) circle [radius=3pt]
(0,-1) circle [radius=3pt]
(0,0) circle [radius=3pt]
(0,1) circle [radius=3pt]
(0,2) circle [radius=3pt]
(0,3) circle [radius=3pt]
(1,-3) circle [radius=3pt]
(1,-2) circle [radius=3pt]
(1,-1) circle [radius=3pt]
(1,0) circle [radius=3pt]
(1,1) circle [radius=3pt]
(1,2) circle [radius=3pt]
(1,3) circle [radius=3pt]
(2,-3) circle [radius=3pt]
(2,-2) circle [radius=3pt]
(2,-1) circle [radius=3pt]
(2,0) circle [radius=3pt]
(2,1) circle [radius=3pt]
(2,2) circle [radius=3pt]
(2,3) circle [radius=3pt]
(3,-3) circle [radius=3pt]
(3,-2) circle [radius=3pt]
(3,0) circle [radius=3pt]
(3,1) circle [radius=3pt]
(3,2) circle [radius=3pt]
(3,3) circle [radius=3pt];
\draw[fill=black]
(0,0) circle [radius=3pt]
(2,1) circle [radius=3pt]
(-1,2) circle [radius=3pt]
(-2,-1) circle [radius=3pt]
(1,-2) circle [radius=3pt]
(-3,1) circle [radius=3pt]
(3,-1) circle [radius=3pt]
(1,3) circle [radius=3pt]
(-1,-3) circle [radius=3pt];
\end{tikzpicture}
\end{minipage}%
\begin{minipage}{0.5\textwidth}
\begin{tikzpicture}[scale=0.6]
\clip (-3.5,-3.5) rectangle (3.5,3.5);
\fill[lightgray!32] (0,1.25) -- (1.25, 1.25) -- (1.25,0) -- (1.25, -1.25) -- (0,-1.25) -- (-1.25, -1.25) -- (-1.25,0) -- (-1.25, 1.25) -- cycle;
\draw[dashed, lightgray]  (0,1.25) -- (1.25, 1.25) -- (1.25,0) -- (1.25, -1.25) -- (0,-1.25) -- (-1.25, -1.25) -- (-1.25,0) -- (-1.25, 1.25) -- cycle;
\fill[lightgray!32] (3,1.25) -- (4.25, 1.25) -- (4.25,0) -- (4.25, -1.25) -- (3,-1.25) -- (1.75, -1.25) -- (1.75,0) -- (1.75, 1.25) -- cycle;
\draw[dashed, lightgray] (3,1.25) -- (4.25, 1.25) -- (4.25,0) -- (4.25, -1.25) -- (3,-1.25) -- (1.75, -1.25) -- (1.75,0) -- (1.75, 1.25) -- cycle;
\fill[lightgray!32] (-3,1.25) -- (-1.75, 1.25) -- (-1.75,0) -- (-1.75, -1.25) -- (-3,-1.25) -- (-4.25, -1.25) -- (-4.25,0) -- (-4.25, 1.25) -- cycle;
\draw[dashed, lightgray] (-3,1.25) -- (-1.75, 1.25) -- (-1.75,0) -- (-1.75, -1.25) -- (-3,-1.25) -- (-4.25, -1.25) -- (-4.25,0) -- (-4.25, 1.25) -- cycle;
\fill[lightgray!32] (0,4.25) -- (1.25, 4.25) -- (1.25,3) -- (1.25, 1.75) -- (0,1.75) -- (-1.25, 1.75) -- (-1.25,3) -- (-1.25, 4.25) -- cycle;
\draw[dashed, lightgray]  (0,4.25) -- (1.25, 4.25) -- (1.25,3) -- (1.25, 1.75) -- (0,1.75) -- (-1.25, 1.75) -- (-1.25,3) -- (-1.25, 4.25) -- cycle;
\fill[lightgray!32] (3,4.25) -- (4.25, 4.25) -- (4.25,3) -- (4.25, 1.75) -- (3,1.75) -- (1.75, 1.75) -- (1.75,3) -- (1.75, 4.25) -- cycle;
\draw[dashed, lightgray]  (3,4.25) -- (4.25, 4.25) -- (4.25,3) -- (4.25, 1.75) -- (3,1.75) -- (1.75, 1.75) -- (1.75,3) -- (1.75, 4.25) -- cycle;
\fill[lightgray!32] (-3,4.25) -- (-1.75, 4.25) -- (-1.75,3) -- (-1.75, 1.75) -- (-3,1.75) -- (-4.25, 1.75) -- (-4.25,3) -- (-4.25, 4.25) -- cycle;
\draw[dashed, lightgray] (-3,4.25) -- (-1.75, 4.25) -- (-1.75,3) -- (-1.75, 1.75) -- (-3,1.75) -- (-4.25, 1.75) -- (-4.25,3) -- (-4.25, 4.25) -- cycle;
\fill[lightgray!32] (0,-1.75) -- (1.25, -1.75) -- (1.25,-3) -- (1.25, -4.25) -- (0,-4.25) -- (-1.25, -4.25) -- (-1.25,-3) -- (-1.25, -1.75) -- cycle;
\draw[dashed, lightgray]  (0,-1.75) -- (1.25, -1.75) -- (1.25,-3) -- (1.25, -4.25) -- (0,-4.25) -- (-1.25, -4.25) -- (-1.25,-3) -- (-1.25, -1.75) -- cycle;
\fill[lightgray!32] (3,-1.75) -- (4.25, -1.75) -- (4.25,-3) -- (4.25, -4.25) -- (3,-4.25) -- (1.75, -4.25) -- (1.75,-3) -- (1.75, -1.75) -- cycle;
\draw[dashed, lightgray]  (3,-1.75) -- (4.25, -1.75) -- (4.25,-3) -- (4.25, -4.25) -- (3,-4.25) -- (1.75, -4.25) -- (1.75,-3) -- (1.75, -1.75) -- cycle;
\fill[lightgray!32] (-3,-1.75) -- (-1.75, -1.75) -- (-1.75,-3) -- (-1.75, -4.25) -- (-3,-4.25) -- (-4.25, -4.25) -- (-4.25,-3) -- (-4.25, -1.75) -- cycle;
\draw[dashed, lightgray] (-3,-1.75) -- (-1.75, -1.75) -- (-1.75,-3) -- (-1.75, -4.25) -- (-3,-4.25) -- (-4.25, -4.25) -- (-4.25,-3) -- (-4.25, -1.75) -- cycle;
\draw
(-3,-3)--(-3,3)
(-2,-3)--(-2,3)
(-1,-3)--(-1,3)
(0,-3)--(0,3)
(1,-3)--(1,3)
(2,-3)--(2,3)
(3,-3)--(3,3)
(-3,-3)--(3,-3)
(-3,-2)--(3,-2)
(-3,-1)--(3,-1)
(-3,0)--(3,0)
(-3,1)--(3,1)
(-3,2)--(3,2)
(-3,3)--(3,3)
(-3,-2)--(-2,-3)
(-3,-1)--(-1,-3)
(-3,0)--(0,-3)
(-3,1)--(1,-3)
(-3,2)--(2,-3)
(-3,3)--(3,-3)
(-2,3)--(3,-2)
(-1,3)--(3,-1)
(0,3)--(3,0)
(1,3)--(3,1)
(2,3)--(3,2)
(-3,2)--(-2,3)
(-3,1)--(-1,3)
(-3,0)--(0,3)
(-3,-1)--(1,3)
(-3,-2)--(2,3)
(-3,-3)--(3,3)
(-2,-3)--(3,2)
(-1,-3)--(3,1)
(0,-3)--(3,0)
(1,-3)--(3,-1)
(2,-3)--(3,-2);
\draw[dashed] 
(-3,-3.5)--(-3,3.5)
(-2,-3.5)--(-2,3.5)
(-1,-3.5)--(-1,3.5)
(0,-3.5)--(0,3.5)
(1,-3.5)--(1,3.5)
(2,-3.5)--(2,3.5)
(3,-3.5)--(3,3.5)
(-3.5,-3)--(3.5,-3)
(-3.5,-2)--(3.5,-2)
(-3.5,-1)--(3.5,-1)
(-3.5,0)--(3.5,0)
(-3.5,1)--(3.5,1)
(-3.5,2)--(3.5,2)
(-3.5,3)--(3.5,3)
(-3.4,1.6)--(-1.6,3.4)
(-3.4,0.6)--(-0.6,3.4)
(-3.4,-0.4)--(0.4,3.4)
(-3.4,-1.4)--(1.4,3.4)
(-3.4,-2.4)--(2.4,3.4)
(-3.4,-3.4)--(3.4,3.4)
(-2.4,-3.4)--(3.4,2.4)
(-1.4,-3.4)--(3.4,1.4)
(-0.4,-3.4)--(3.4,0.4)
(0.6,-3.4)--(3.4,-0.6)
(1.6,-3.4)--(3.4,-1.6)
(-3.4,-1.6)--(-1.6,-3.4)
(-3.4,-0.6)--(-0.6,-3.4)
(-3.4,0.4)--(0.4,-3.4)
(-3.4,1.4)--(1.4,-3.4)
(-3.4,2.4)--(2.4,-3.4)
(-3.4,3.4)--(3.4,-3.4)
(-2.4,3.4)--(3.4,-2.4)
(-1.4,3.4)--(3.4,-1.4)
(-0.4,3.4)--(3.4,-0.4)
(0.6,3.4)--(3.4,0.6)
(1.6,3.4)--(3.4,1.6);
\draw[fill=white]
(-3,-2) circle [radius=3pt]
(-3,-1) circle [radius=3pt]
(-3,1) circle [radius=3pt]
(-3,2) circle [radius=3pt]
(-2,-3) circle [radius=3pt]
(-2,-2) circle [radius=3pt]
(-2,-1) circle [radius=3pt]
(-2,0) circle [radius=3pt]
(-2,1) circle [radius=3pt]
(-2,2) circle [radius=3pt]
(-2,3) circle [radius=3pt]
(-1,-3) circle [radius=3pt]
(-1,-2) circle [radius=3pt]
(-1,-1) circle [radius=3pt]
(-1,0) circle [radius=3pt]
(-1,1) circle [radius=3pt]
(-1,2) circle [radius=3pt]
(-1,3) circle [radius=3pt]
(0,-2) circle [radius=3pt]
(0,-1) circle [radius=3pt]
(0,1) circle [radius=3pt]
(0,2) circle [radius=3pt]
(1,-3) circle [radius=3pt]
(1,-2) circle [radius=3pt]
(1,-1) circle [radius=3pt]
(1,0) circle [radius=3pt]
(1,1) circle [radius=3pt]
(1,2) circle [radius=3pt]
(1,3) circle [radius=3pt]
(2,-3) circle [radius=3pt]
(2,-2) circle [radius=3pt]
(2,-1) circle [radius=3pt]
(2,0) circle [radius=3pt]
(2,1) circle [radius=3pt]
(2,2) circle [radius=3pt]
(2,3) circle [radius=3pt]
(3,-2) circle [radius=3pt]
(3,-1) circle [radius=3pt]
(3,1) circle [radius=3pt]
(3,2) circle [radius=3pt];
\draw[fill=black]
(0,-3) circle [radius=3pt]
(0,3) circle [radius=3pt]
(-3,0) circle [radius=3pt]
(3,0) circle [radius=3pt]
(0,0) circle [radius=3pt]
(3,3) circle [radius=3pt]
(3,-3) circle [radius=3pt]
(-3,-3) circle [radius=3pt]
(-3,3) circle [radius=3pt];
\end{tikzpicture}
\end{minipage}
\caption{On the left, graph $T_4$ and, on the right, graph $T_8$. 
In both figures, the black vertices represent the dominating set $S$ described in the proof of Theorems~\ref{thm:squareGrid} and of~\ref{thm:octagonalGrid}, respectively, while the shadowed zones represent how vertices of $S$ dominate their neighbors, thus creating a partition of the vertices of the graph.}
\label{fig:T48}  
\end{figure}


We next turn to infinite octagonal grids. The following theorem shows that we can come up with a strongly optimal dominating set that is also an eternal dominating set. The argument here is similar in spirit to square grids. 
The proposed solution can be visualized in the right graph in~\Cref{fig:T48}.

\begin{restatable}[$m$-Eternal Domination on the infinite Octagonal Grid]{theorem}{octagonalGrid}
    \label{thm:octagonalGrid}
    There is a strongly optimal dominating set for the infinite octagonal grid $T_8$ that is also an $m$-eternal dominating set. 
\end{restatable}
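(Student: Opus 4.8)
The plan is to follow the template of \Cref{thm:squareGrid}, exploiting the crucial structural fact that $T_8$ is the \emph{king graph}: the closed neighborhood $N[(x,y)]$ is precisely the $3\times 3$ block of the nine vertices at Chebyshev distance at most one from $(x,y)$. I would take as candidate the sublattice
$$S_8 = \{(x,y) \mid x \equiv 0 \pmod 3 \text{ and } y \equiv 0 \pmod 3\},$$
which is exactly the set of black vertices drawn on the right of \Cref{fig:T48}. The whole argument then reduces to the observation that these $3\times 3$ blocks tile the plane, which is far more transparent here than the skewed cross-tiling needed for $T_4$.

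First I would prove that $\{N[s] \mid s \in S_8\}$ partitions $V(T_8)$, which simultaneously yields that $S_8$ is dominating and strongly optimal. The clean way is a uniqueness-of-representation argument: every integer $x$ can be written uniquely as $x = 3a + r$ with $r \in \{-1,0,1\}$, and likewise $y = 3b + s$ with $s \in \{-1,0,1\}$. Hence for any vertex $v=(x,y)$ the point $(3a,3b)\in S_8$ is the \emph{unique} lattice point at Chebyshev distance at most one from $v$, i.e.\ the unique $s' \in S_8$ with $s' \in N[v]$. This gives $|N[v]\cap S_8| = 1$ for every $v$, so $S_8$ dominates $T_8$ and is strongly optimal by definition.

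For the eternal part I would maintain the invariant that the guard configuration is always a \emph{translate} $S_8 + t$ of the lattice, which is automatically strongly optimal and dominating since a translate of a tiling is still a tiling. When an unguarded vertex $v$ is attacked, let $s$ be the unique guard position with $v \in N[s]$ (guaranteed by the partition); since $v \neq s$, the displacement $d = v - s$ is one of the eight king-move unit vectors. The defender moves \emph{every} guard by $d$: each such move is legal because all eight unit directions, including the four diagonals, are edges of $T_8$. The resulting configuration is $S_8 + t + d$, again a translate, hence still a strongly optimal dominating set, and the guard formerly at $s$ now occupies $v = s + d$, defending the attack and restoring the invariant, so the defense continues indefinitely.

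I do not expect a genuine obstacle here; the proof is structurally \emph{easier} than the square-grid case. The one point that deserves an explicit sentence—and the only place where the octagonal structure is essential rather than incidental—is that the attack direction $d$ may be diagonal, so I must note that diagonal guard moves are permitted in $T_8$ and that a diagonal translate of the lattice still tiles the plane; both are immediate from the king-graph adjacency and the tiling property established in the first step.
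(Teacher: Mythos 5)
Your proposal is correct and follows essentially the same route as the paper: the same sublattice $3\mathbb{Z}\times 3\mathbb{Z}$, the same observation that the closed neighborhoods ($3\times 3$ king-move blocks) tile the plane (giving domination and strong optimality simultaneously), and the same defense by translating every guard by the king-move displacement $d=v-s$. Your uniqueness-of-representation argument ($x=3a+r$, $r\in\{-1,0,1\}$) is a tidier way to verify the tiling than the paper's row-by-row check, but the underlying idea is identical.
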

\begin{proof}
  We define vertex set $S$ on the infinite octagonal grid $T_8$ as follows: $(0,0)\in S$; moreover, if $(x,y)\in S$ then also $(x,y+3)$,$(x,y-3)$, $(x+3,y)$ and $(x-3,y)$ belong to $S$. 
    This vertex set can be visualized in~\Cref{fig:T48}. Note that the infinite set of closed neighborhoods $N[S]$ of each vertex $s\in S$ is a partition of the vertices of $T_8$; therefore, 
    $S$ is a dominating set of $T_8$. Indeed, observe that for any guard at $(x,y)$ there is a guard at $(x',y')=(x+3,y)$ and no guards lie at $y$-coordinate $y+1$ and $y-1$. 
    Hence, looking at the rows of $T_8$, for each guard at $(x,y)$, the two vertices to its right do not contain any guard;
    nevertheless, $(x+1,y)$ is dominated by the guard at $(x,y)$ and $(x+2, y)$ is dominated by the guard at $(x+3, y)$.
    Moreover, each vertex at $y$-coordinate $y+1$ is dominated, indeed they can be partitioned into 3 vertex sets $(x-1, y+1)$, $(x, y+1)$ and $(x+1, y+1)$ dominated by the guard at $(x,y)$.
    Finally, in the same way, each vertex at $y$-coordinate $y+2$ is dominated.
    The generality of the reasoning proves that
    $S$ is a dominating set of $T_8$.

    Next, we show that $S$ is a configuration of an $m$-eternal dominating set of $T_8$. 
    Indeed, consider an attack on a vertex $v\in V\setminus S$. 
    Since $S$ is a dominating set and by construction, there exists a unique vertex $s\in S$ that is a neighbor of $v$. 
    It is not restrictive to consider the case $s=(i^*,j^*)$ and $v=(i^*+1,j^*)$ (the other seven cases are analogous). 
    For every $(i,j)\in S$ (included $(i^*,j^*)$), the guard on $(i,j)$ moves to $(i+1,j)$. 
    The new position of the guards is a translation of $S$ by one unit in the same direction and thus still forms a dominating set. 
    Therefore, with this strategy, the guards move along configurations of an $m$-eternal dominating set.
    
     We have that $S$ is strongly optimal because for every $s\in S$, $S$ uses only one guard for the nine vertices of $N[s]$, formally 
     $|N[s]\cap S|=1$, and $\{N[s]~|~s\in S\}$ is a partition of $T_8$.
     \qed  
\end{proof}

Also for the infinite hexagonal and triangular grids we obtain strongly optimal eternal dominating sets for both cases. 

\begin{restatable}[$m$-Eternal Domination on the infinite Hexagonal Grid]{theorem}{hexagonalGrid} 
\label{thm:hexagonalGrid}
There is a strongly optimal dominating set for the infinite hexagonal grid $T_3$ that is also an eternal dominating set. 
\end{restatable}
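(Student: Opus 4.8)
The plan is to follow the same three-part template used for $T_4$ and $T_8$: exhibit an explicit periodic set, prove its closed neighborhoods tile $T_3$, and then describe a movement strategy. Concretely, I would take
\[
S_3 = \{(x,y) : x \text{ even},\ y\equiv 0 \pmod 4\} \cup \{(x,y): x \text{ odd},\ y\equiv 2 \pmod 4\}.
\]
Each vertex of $T_3$ has degree $3$, so $|N[s]|=4$ for every $s$. First I would check that $\{N[s] : s\in S_3\}$ partitions $V(T_3)$: a vertex $s=(x,y)$ with $x+y$ even has $N[s]=\{s,(x,y\pm1),(x+1,y)\}$, and one with $x+y$ odd has $N[s]=\{s,(x,y\pm1),(x-1,y)\}$; a short case analysis on the residues of $(x,y)$ modulo $(2,4)$ shows every vertex lies in exactly one such closed neighborhood. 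This simultaneously proves that $S_3$ is a dominating set and, since every domination index equals one, that $S_3$ is strongly optimal.

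The core is the eternal strategy, and here the main obstacle is structural: unlike $T_4$ and $T_8$, the grid $T_3$ is \emph{not} invariant under unit translations. Indeed $T_3$ is bipartite, with color classes given by the parity of $x+y$ (every edge joins an even-sum to an odd-sum vertex), and the direction of the horizontal edge at a vertex depends on that parity; only the even-sum translations (the index-two sublattice $L_0$ generated by $(1,1)$ and $(1,-1)$) are automorphisms. Consequently the clean ``shift every guard one step in the direction of the attack'' argument used for the square and octagonal grids does not directly apply, since a uniform unit shift is neither a legal simultaneous move for all guards nor an automorphism.

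To get around this I would first record the enabling lemma: \emph{every} translate $S_3+t$ is a strongly optimal dominating set. For $t$ even-sum this is immediate, as the translation is an automorphism; for $t$ odd-sum it suffices to check one representative, e.g.\ $S_3+(1,0)$, which is a perfect code either by direct residue bookkeeping or by observing that the reflection $(x,y)\mapsto(1-x,y)$ is an automorphism of $T_3$ carrying $S_3$ onto $S_3+(1,0)$. Since the two cosets of $\mathbb{Z}^2/L_0$ are thereby covered, all translates are perfect codes. I would then maintain the invariant that the current configuration is some translate $S_3+(p,q)$ and respond to an attack on $v$ as follows. If $v$ lies above (resp.\ below) its unique guarded neighbor, every guard moves one step up (resp.\ down) along its always-present vertical edge, and the configuration becomes $S_3+(p,q\pm1)$. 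If $v$ is the horizontal neighbor of its guard, every guard moves along its unique horizontal edge --- even-sum guards right, odd-sum guards left --- and the configuration becomes $S_3+(p+1,q)$.

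In each case the move is legal (each guard traverses an existing incident edge), the attacked vertex ends up occupied (the destinations of the chosen move are exactly the vertices dominated in that direction, so $v$ is among them), and the new configuration is again a translate of $S_3$, hence strongly optimal by the lemma; the defender therefore wins indefinitely. The step I expect to require the most care is verifying that the parity-split horizontal move genuinely realizes the single translate $S_3+(p+1,q)$: because the even- and odd-sum guards move in opposite $x$-directions, one must confirm that the periodic pattern reshuffles back into one translate rather than into some other (a priori possibly non-dominating) set. Tracking the two residue families $A=\{x\equiv0,\ y\equiv0\}$ and $B=\{x\equiv1,\ y\equiv2\}$ separately through the move reduces this to a finite check. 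Everything else --- the partition, strong optimality, and the vertical moves --- is then routine.
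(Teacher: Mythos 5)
Your proposal is correct, and it lands on the same overall template as the paper's proof --- exhibit a perfect code (a dominating set whose closed neighborhoods partition $V(T_3)$, which immediately gives strong optimality) and defend every attack by shifting the entire code --- but several of its ingredients are genuinely different and worth comparing. The paper's code is the union of two cosets of the index-$8$ lattice generated by $(2,2)$ and $(3,-1)$ (each lattice point $(x,y)$ paired with $(x-1,y)$), whereas your $S_3$ is itself an index-$4$ sublattice of $\mathbb{Z}^2$; both are valid perfect codes, and your residue check modulo $(2,4)$ does go through. For the defense, the paper likewise splits the guards by the parity of $x+y$, but uses different move rules (e.g.\ for an attack above an even-sum guard it sends even-sum guards up and odd-sum guards down, landing on $S+(1,1)$; for a horizontal attack it sends odd-sum guards up rather than left). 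The real added value of your write-up is the explicit enabling lemma that \emph{every} translate of the code is again a perfect code, handled for the odd-sum coset by the reflection $(x,y)\mapsto(1-x,y)$: the paper instead asserts that the new position ``is a translation of $S$ by one unit in the same direction and thus still forms a dominating set,'' which is loose, since its two parity classes move in different directions and the resulting set is a translate by an even-sum vector such as $(1,1)$ or $(2,0)$ rather than by the unit attack vector; your lemma closes exactly that justification gap. Finally, the one step you flag as delicate --- that the parity-split horizontal move reassembles into the single translate $S_3+(p+1,q)$ --- does check out: the families $\{x\equiv 0\ (\mathrm{mod}\ 2),\ y\equiv 0\ (\mathrm{mod}\ 4)\}$ and $\{x\equiv 1,\ y\equiv 2\}$ are carried to $\{x\equiv 1,\ y\equiv 0\}$ and $\{x\equiv 0,\ y\equiv 2\}$ respectively, which is precisely $S_3+(1,0)$, and this is insensitive to the parity of the current offset because $(2,0)$ lies in the period lattice of $S_3$, so shifting $x$ by $+1$ or $-1$ yields the same set.
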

\begin{proof}
   We define vertex set $S$ on the infinite hexagonal grid $T_3$ as follows: $(0,0)\in S$; moreover, if $(x,y)\in S$ then also $(x+2,y+2)$, $(x+3,y-1)$, $(x+1,y-3)$, $(x-2,y-2)$, $(x-3,y+1)$ and $(x-1,y+3)$ belong to $S$. Finally, if 
    $x+y$ is even, also $(x-1,y)\in S$. 
    This vertex set can be visualized in~\Cref{fig:T3}. 
    For any two integers $x,y\in \mathbb{Z}$, we say that $(x,y)$ is a \rp vertex if $x+y$ is even and a \lp vertex otherwise. 
    Let $S_{\rp}$ and $S_{\lp}$ the subsets of $S$ constituted by the \rp vertices and \lp vertices of $S$, respectively; note that these two sets partition $S$.
    Note that the infinite set of closed neighborhoods $N[S]$ of each $s\in S$ is a partition of the vertices of $T_3$; therefore, $S$ is a dominating set of $T_3$.

    Observe that for any guard at $(x,y)$ such that $x+y$ is even, there is a guard at $(x', y')=(x-1, y)$, a guard at $(x'', y'')=(x-1, y+3)$, a guard at $(x''', y''')=(x'+2, y'+2)=(x+1, y+2)$, a guard at $(x^{iv}, y^{iv})=(x'''-1, y'''+3)=(x, y+5)$; 
    moreover, there is a guard at $(x^v, y^v)=(x+2, y+2)$, a guard at $(x^{vi}, y^{vi})=(x^v+3, y^v-1)=(x+1, y+5)$, and a guard at $(x^{vii}, y^{vii})=(x^{vi}-1, y^{vi}+3)=(x, y+8)$. 
    Hence, looking at the rows of $T_3$, for each guard at $(x,y)$, the 4 vertices above it do not contain any guard, the fifth does, and the next 2 vertices above do not contain a guard again.
    Nevertheless, $(x+1, y)$ is dominated by the guard at $(x, y)$; 
    $(x+2, y)$ is dominated by the guard at $(x''', y''')=(x+1, y+2)$;
    $(x+3, y)$ is dominated by the guard at $(x'', y'')=(x-1, y+3)$;
    $(x+4, y)$ and $(x+6, y)$ are dominated by the guard at $(x^{iv}, y^{iv})=(x, y+5)$;
    finally, $(x+7, y)$ is dominated by the guard at $(x^{vii}, y^{vii})=(x,y+8)$.
    Analogous considerations can be done starting from $(x,y)$ such that $x+y$ is odd.
    The generality of the reasoning proves that
    $S$ is a dominating set of $T_3$.

    $S$ is a configuration of an $m$-eternal dominating set of $T_3$. 
    Indeed, 
    consider an attack to a vertex $v\in V\setminus S$. 
    Since $S$ is a dominating set and by construction, there exists a unique vertex $s\in S$ that is a neighbor of $v$. 
    Assume first that $s=(i^*,j^*)$ belongs to $S_{\rp}$. 
    If $v=(i^*,j^*+1)$, then the guards moves as follows: for every $(i,j)\in S_{\rp}$, the guard on $(i,j)$ (included $(i^*,j^*)$) moves to $(i,j+1)$ and for every $(i,j)\in S_{\lp}$, the guard on $(i,j)$ move to $(i,j-1)$. 
    A display of this movement can be visualized in~\Cref{fig:T3}. 
    If $v=(i^*+1,j^*)$, then the guards moves as follows: for every $(i,j)\in S_{\rp}$, the guard on $(i,j)$  (included $(i^*,j^*)$) moves to $(i+1,j)$ and for every $(i,j)\in S_{\lp}$, the guard on $(i,j)$ move to $(i,j+1)$.
    If $v=(i^*,j^*-1)$, then the guards moves as follows: for every $(i,j)\in S_{\rp}$, the guard on $(i,j)$  (included $(i^*,j^*)$) moves to $(i,j-1)$ and for every $(i,j)\in S_{\lp}$, the guard on $(i,j)$ move to $(i-1,j)$. 
    The case where $s=(i^*,j^*)$ belongs to $S_{\lp}$ is dealt symmetrically. 
    The new position of the guards is a translation of $S$ by one unit in the same direction and thus still forms a dominating set. 
    Therefore, with this strategy, the guards move along configurations of an $m$-eternal dominating set.
    
    We have that $S$ is strongly optimal because for every $s\in S$, $S$ uses only one guard for the four vertices of $N[s]$, formally 
    $|N[s]\cap S|=1$, and $\{N[s]~|~s\in S\}$ is a partition of $T_3$. \qed   
\end{proof}

\begin{figure}[ht]
\centering
\vspace*{-.2cm}
\hspace*{1cm}
\begin{minipage}{0.5\textwidth}
\begin{tikzpicture}[scale=0.6]
\clip (-3.5,-3.5) rectangle (3.5,3.5);
\fill[lightgray!32] (-0.8,1.5) -- (-.8,-1.5) -- (-2.5,0) -- cycle;
\draw[dashed, lightgray] (-0.8,1.5) -- (-.8,-1.5) -- (-2.5,0) -- cycle;
\fill[lightgray!32] (-1.8,4.5) -- (-1.8,1.5) -- (-3.5,3) -- cycle;
\draw[dashed, lightgray](-1.8,4.5) -- (-1.8,1.5) -- (-3.5,3) -- cycle;
\fill[lightgray!32] (-2.8,-.5) -- (-2.8,-3.5) -- (-4.5,-2) -- cycle;
\draw[dashed, lightgray](-2.8,-.5) -- (-2.8,-3.5) -- (-4.5,-2) -- cycle;
\fill[lightgray!32] (.2,-1.5) -- (.2,-4.5) -- (-1.5,-3) -- cycle;
\draw[dashed, lightgray](.2,-1.5) -- (.2,-4.5) -- (-1.5,-3) -- cycle;
\fill[lightgray!32] (1.2,3.5) -- (1.2,.5) -- (-.5,2) -- cycle;
\draw[dashed, lightgray] (1.2,3.5) -- (1.2,.5) -- (-.5,2) -- cycle;
\fill[lightgray!32] (2.2,.5) -- (2.2,-2.5) -- (.5,-1) -- cycle;
\draw[dashed, lightgray]  (2.2,.5) -- (2.2,-2.5) -- (.5,-1) -- cycle;
\fill[lightgray!32] (-3.2,2.5) -- (-3.2,-0.5) -- (-1.5,1) -- cycle;
\draw[dashed, lightgray]  (-3.2,2.5) -- (-3.2,-0.5) -- (-1.5,1) -- cycle;
\fill[lightgray!32] (-2.2,-.5) -- (-2.2,-3.5) -- (-0.5,-2) -- cycle;
\draw[dashed, lightgray] (-2.2,-.5) -- (-2.2,-3.5) -- (-0.5,-2) -- cycle;
\fill[lightgray!32] (-1.2,4.5) -- (-1.2,1.5) -- (0.5,3) -- cycle;
\draw[dashed, lightgray] (-1.2,4.5) -- (-1.2,1.5) -- (0.5,3) -- cycle;
\fill[lightgray!32] (-.2,1.5) -- (-.2,-1.5) -- (1.5,0) -- cycle;
\draw[dashed, lightgray]  (-.2,1.5) -- (-.2,-1.5) -- (1.5,0) -- cycle;
\fill[lightgray!32] (0.8,-1.5) -- (0.8,-4.5) -- (2.5,-3) -- cycle;
\draw[dashed, lightgray] (0.8,-1.5) -- (0.8,-4.5) -- (2.5,-3) -- cycle;
\fill[lightgray!32] (1.8,3.5) -- (1.8,.5) -- (3.5,2) -- cycle;
\draw[dashed, lightgray] (1.8,3.5) -- (1.8,.5) -- (3.5,2) -- cycle;
\fill[lightgray!32] (2.8,.5) -- (2.8,-2.5) -- (4.5,-1) -- cycle;
\draw[dashed, lightgray]  (2.8,.5) -- (2.8,-2.5) -- (4.5,-1) -- cycle;
\fill[lightgray!32] (4.2,2.5) -- (4.2,-.5) -- (2.5,1) -- cycle;
\draw[dashed, lightgray]  (4.2,2.5) -- (4.2,-.5) -- (2.5,1) -- cycle;
\fill[lightgray!32] (3.2,5.5) -- (3.2,2.5) -- (1.5,4) -- cycle;
\draw[dashed, lightgray]  (3.2,5.5) -- (3.2,2.5) -- (1.5,4) -- cycle;
\fill[lightgray!32] (4.2,-1.5) -- (4.2,-4.5) -- (2.5,-3) -- cycle;
\draw[dashed, lightgray]  (4.2,-1.5) -- (4.2,-4.5) -- (2.5,-3) -- cycle;
\draw
(-3,-3)--(-3,3)
(-2,-3)--(-2,3)
(-1,-3)--(-1,3)
(0,-3)--(0,3)
(1,-3)--(1,3)
(2,-3)--(2,3)
(3,-3)--(3,3)
(-3,-3)--(-2,-3)(-1,-3)--(0,-3)(1,-3)--(2,-3)
(-2,-2)--(-1,-2)(0,-2)--(1,-2)(2,-2)--(3,-2)
(-3,-1)--(-2,-1)(-1,-1)--(0,-1)(1,-1)--(2,-1)
(-2,0)--(-1,0)(0,0)--(1,0)(2,0)--(3,0)
(-3,1)--(-2,1)(-1,1)--(0,1)(1,1)--(2,1)
(-2,2)--(-1,2)(0,2)--(1,2)(2,2)--(3,2)
(-3,3)--(-2,3)(-1,3)--(0,3)(1,3)--(2,3);
\draw[dashed] 
(-3,-3.5)--(-3,3.5)
(-2,-3.5)--(-2,3.5)
(-1,-3.5)--(-1,3.5)
(0,-3.5)--(0,3.5)
(1,-3.5)--(1,3.5)
(2,-3.5)--(2,3.5)
(3,-3.5)--(3,3.5)
(3,-3)--(3.5,-3)
(-3.5,-2)--(-3,-2)
(3,-1)--(3.5,-1)
(-3.5,0)--(-3,0)
(3,1)--(3.5,1)
(-3.5,2)--(-3,2)
(3,3)--(3.5,3);
\draw[very thick,->](0,0)--(1,0);
\draw[very thick,->](2,2)--(3,2);
\draw[very thick,->](-1,3)--(0,3);
\draw[very thick,->](-3,1)--(-2,1);
\draw[very thick,->](-2,-2)--(-1,-2);
\draw[very thick,->](1,-3)--(2,-3);
\draw[very thick,->,dashed](3,-1)--(3.5,-1);
\draw[very thick,->](2,-1)--(2,0);
\draw[very thick,->](1,2)--(1,3);
\draw[very thick,->,dashed](-2,3)--(-2,3.5);
\draw[very thick,->](-1,0)--(-1,1);
\draw[very thick,->](-3,-2)--(-3,-1);
\draw[very thick,->](0,-3)--(0,-2);
\draw[very thick,->,dashed](3,-3.5)--(3,-3.2);
\draw[fill=white]
(-3,-3) circle [radius=3pt]
(-3,-1) circle [radius=3pt]
(-3,0) circle [radius=3pt]
(-3,2) circle [radius=3pt]
(-3,3) circle [radius=3pt]
(-2,-3) circle [radius=3pt]
(-2,-1) circle [radius=3pt]
(-2,0) circle [radius=3pt]
(-2,1) circle [radius=3pt]
(-2,2) circle [radius=3pt]
(-1,-3) circle [radius=3pt]
(-1,-2) circle [radius=3pt]
(-1,-1) circle [radius=3pt]
(-1,1) circle [radius=3pt]
(-1,2) circle [radius=3pt]
(0,-2) circle [radius=3pt]
(0,-1) circle [radius=3pt]
(0,1) circle [radius=3pt]
(0,2) circle [radius=3pt]
(0,3) circle [radius=3pt]
(1,-2) circle [radius=3pt]
(1,-1) circle [radius=3pt]
(1,1) circle [radius=3pt]
(1,3) circle [radius=3pt]
(2,-3) circle [radius=3pt]
(2,-2) circle [radius=3pt]
(2,0) circle [radius=3pt]
(2,1) circle [radius=3pt]
(2,3) circle [radius=3pt]
(3,-3) circle [radius=3pt]
(3,-2) circle [radius=3pt]
(3,0) circle [radius=3pt]
(3,1) circle [radius=3pt]
(3,2) circle [radius=3pt]
(3,3) circle [radius=3pt];
\draw[fill=black]
(0,0) circle [radius=3pt]
(-1,0) circle [radius=3pt]
(2,2) circle [radius=3pt]
(1,2) circle [radius=3pt]
(2,-1) circle [radius=3pt]
(3,-1) circle [radius=3pt]
(0,-3) circle [radius=3pt]
(1,-3) circle [radius=3pt]
(-2,-2) circle [radius=3pt]
(-3,-2) circle [radius=3pt]
(-3,1) circle [radius=3pt]
(-2,3) circle [radius=3pt]
(-1,3) circle [radius=3pt];
\draw[fill=red]
(1,0) circle [radius=3pt];
\end{tikzpicture}
\end{minipage}%
\begin{minipage}{0.5\textwidth}
\begin{tikzpicture}[scale=0.6]
\clip (-3.5,-3.5) rectangle (3.5,3.5);
\fill[lightgray!32] (1.2,1.5) -- (1.2,-1.5) -- (-0.5,0) -- cycle;
\draw[dashed, lightgray] (1.2,1.5) -- (1.2,-1.5) -- (-0.5,0) -- cycle;
\fill[lightgray!32] (3.2,3.5) -- (3.2,.5) -- (1.5,2) -- cycle;
\draw[dashed, lightgray]  (3.2,3.5) -- (3.2,.5) -- (1.5,2) -- cycle;
\fill[lightgray!32] (2.2,-1.5) -- (2.2,-4.5) -- (.5,-3) -- cycle;
\draw[dashed, lightgray]  (2.2,-1.5) -- (2.2,-4.5) -- (.5,-3) -- cycle;
\fill[lightgray!32] (-.8,-.5) -- (-.8,-3.5) -- (-2.5,-2) -- cycle;
\draw[dashed, lightgray] (-.8,-.5) -- (-.8,-3.5) -- (-2.5,-2) -- cycle;
\fill[lightgray!32] (.2,4.5) -- (.2,1.5) -- (-1.5,3) -- cycle;
\draw[dashed, lightgray] (.2,4.5) -- (.2,1.5) -- (-1.5,3) -- cycle;
\fill[lightgray!32] (-1.2,2.5) -- (-1.2,-0.5) -- (.5,1) -- cycle;
\draw[dashed, lightgray] (-1.2,2.5) -- (-1.2,-0.5) -- (.5,1) -- cycle;
\fill[lightgray!32] (-.2,-.5) -- (-0.2,-3.5) -- (1.5,-2) -- cycle;
\draw[dashed, lightgray]  (-.2,-.5) -- (-0.2,-3.5) -- (1.5,-2) -- cycle;
\fill[lightgray!32] (.8,4.5) -- (.8,1.5) -- (2.5,3) -- cycle;
\draw[dashed, lightgray] (.8,4.5) -- (.8,1.5) -- (2.5,3) -- cycle;
\fill[lightgray!32] (1.8,1.5) -- (1.8,-1.5) -- (3.5,0) -- cycle;
\draw[dashed, lightgray] (1.8,1.5) -- (1.8,-1.5) -- (3.5,0) -- cycle;
\fill[lightgray!32] (2.8,-1.5) -- (2.8,-4.5) -- (4.5,-3) -- cycle;
\draw[dashed, lightgray] (2.8,-1.5) -- (2.8,-4.5) -- (4.5,-3) -- cycle;
\fill[lightgray!32] (4.2,.5) -- (4.2,-2.5) -- (2.5,-1) -- cycle;
\draw[dashed, lightgray](4.2,.5) -- (4.2,-2.5) -- (2.5,-1) -- cycle;
\fill[lightgray!32] (-3.2,.5) -- (-3.2,-2.5) -- (-1.5,-1) -- cycle;
\draw[dashed, lightgray] (-3.2,.5) -- (-3.2,-2.5) -- (-1.5,-1) -- cycle;
\fill[lightgray!32] (-3.2,4.5) -- (-3.2,1.5) -- (-1.5,3) -- cycle;
\draw[dashed, lightgray] (-3.2,4.5) -- (-3.2,1.5) -- (-1.5,3) -- cycle;
\fill[lightgray!32] (-1.8,2.5) -- (-1.8,-.5) -- (-3.5,1) -- cycle;
\draw[dashed, lightgray] (-1.8,2.5) -- (-1.8,-.5) -- (-3.5,1) -- cycle;
\fill[lightgray!32] (-2.2,-2.5) -- (-2.2,-5.5) -- (-0.5,-4) -- cycle;
\draw[dashed, lightgray](-2.2,-2.5) -- (-2.2,-5.5) -- (-0.5,-4) -- cycle;
\fill[lightgray!32] (-2.8,-2.5) -- (-2.8,-5.5) -- (-4.5,-4) -- cycle;
\draw[dashed, lightgray](-2.8,-2.5) -- (-2.8,-5.5) -- (-4.5,-4) -- cycle;
\draw
(-3,-3)--(-3,3)
(-2,-3)--(-2,3)
(-1,-3)--(-1,3)
(0,-3)--(0,3)
(1,-3)--(1,3)
(2,-3)--(2,3)
(3,-3)--(3,3)
(-3,-3)--(-2,-3)(-1,-3)--(0,-3)(1,-3)--(2,-3)
(-2,-2)--(-1,-2)(0,-2)--(1,-2)(2,-2)--(3,-2)
(-3,-1)--(-2,-1)(-1,-1)--(0,-1)(1,-1)--(2,-1)
(-2,0)--(-1,0)(0,0)--(1,0)(2,0)--(3,0)
(-3,1)--(-2,1)(-1,1)--(0,1)(1,1)--(2,1)
(-2,2)--(-1,2)(0,2)--(1,2)(2,2)--(3,2)
(-3,3)--(-2,3)(-1,3)--(0,3)(1,3)--(2,3);
\draw[dashed] 
(-3,-3.5)--(-3,3.5)
(-2,-3.5)--(-2,3.5)
(-1,-3.5)--(-1,3.5)
(0,-3.5)--(0,3.5)
(1,-3.5)--(1,3.5)
(2,-3.5)--(2,3.5)
(3,-3.5)--(3,3.5)
(3,-3)--(3.5,-3)
(-3.5,-2)--(-3,-2)
(3,-1)--(3.5,-1)
(-3.5,0)--(-3,0)
(3,1)--(3.5,1)
(-3.5,2)--(-3,2)
(3,3)--(3.5,3);
\draw[]
(-3,3)--(-2,3)--(-2,2) (-1,2)--(-1,3)--(0,3)
(0,2)--(1,2)--(1,3) (2,3)--(2,2)--(3,2)(1,2)--(1,1)(2,2)--(2,1)
(-3,2)--(-3,0)(-3,-1)--(-3,-3)(-3,1)--(-2,1)(-2,-1)--(-2,-3)
(-2,-2)--(-1,-2)(-1,1)--(-1,-1)(-1,0)--(-2,0)(0,0)--(1,0)(0,1)--(0,-1)
(-1,-3)--(0,-3)--(0,-2) (1,-2)--(1,-3)--(2,-3)(1,-1)--(2,-1)(2,0)--(2,-2)
(3,0)--(3,-2);
\draw[fill=white]
(-3,-3) circle [radius=3pt]
(-3,-1) circle [radius=3pt]
(-3,0) circle [radius=3pt]
(-3,2) circle [radius=3pt]
(-3,3) circle [radius=3pt]
(-2,-3) circle [radius=3pt]
(-2,-1) circle [radius=3pt]
(-2,0) circle [radius=3pt]
(-2,1) circle [radius=3pt]
(-2,2) circle [radius=3pt]
(-1,-3) circle [radius=3pt]
(-1,-2) circle [radius=3pt]
(-1,-1) circle [radius=3pt]
(-1,1) circle [radius=3pt]
(-1,2) circle [radius=3pt]
(0,-2) circle [radius=3pt]
(0,-1) circle [radius=3pt]
(0,2) circle [radius=3pt]
(0,3) circle [radius=3pt]
(1,-2) circle [radius=3pt]
(1,-1) circle [radius=3pt]
(1,0) circle [radius=3pt]
(1,1) circle [radius=3pt]
(1,3) circle [radius=3pt]
(2,-3) circle [radius=3pt]
(2,-2) circle [radius=3pt]
(2,0) circle [radius=3pt]
(2,1) circle [radius=3pt]
(2,3) circle [radius=3pt]
(3,-3) circle [radius=3pt]
(3,-2) circle [radius=3pt]
(3,0) circle [radius=3pt]
(3,1) circle [radius=3pt]
(3,2) circle [radius=3pt]
(3,3) circle [radius=3pt]
(0,0) circle [radius=3pt]
(-1,0) circle [radius=3pt]
(2,2) circle [radius=3pt]
(1,2) circle [radius=3pt]
(2,-1) circle [radius=3pt]
(3,-1) circle [radius=3pt]
(0,-3) circle [radius=3pt]
(1,-3) circle [radius=3pt]
(-2,-2) circle [radius=3pt]
(-3,-2) circle [radius=3pt]
(-3,1) circle [radius=3pt]
(-2,3) circle [radius=3pt]
(-1,3) circle [radius=3pt]
(0,1) circle [radius=3pt];
\draw[fill=black]
(1,0)circle [radius=3pt]
(3,2)circle [radius=3pt]
(0,3)circle [radius=3pt]
(-2,1)circle [radius=3pt]
(-1,-2)circle [radius=3pt]
(2,-3)circle [radius=3pt]
(2,0)circle [radius=3pt]
(1,3)circle [radius=3pt]
(-1,1)circle [radius=3pt]
(-3,-1)circle [radius=3pt]
(0,-2)circle [radius=3pt]
(3,-3)circle [radius=3pt];
\end{tikzpicture}
\end{minipage}
\caption{The graph $T_3$. In both figures, the black vertices represent the dominating set $S$ described in the proof of~\Cref{thm:hexagonalGrid}, while the shadowed zones represent how vertices of $S$ dominate their neighbors, thus creating a partition of the vertices of the graph. 
On the left, the thick arrowed edges represent the movement of the guards along the edges when the red vertex is attacked.}
\label{fig:T3}  
\end{figure}



\begin{restatable}[$m$-Eternal Domination on the infinite Triangular Grid]{theorem}{triangularGrid}
\label{thm:triangularGrid}
There is a strongly optimal dominating set for the infinite triangular grid $T_6$ that is also an eternal dominating set. 
\end{restatable}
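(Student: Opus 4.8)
The plan is to follow the same template as the three previous grid theorems: exhibit a periodic set $S$ whose closed neighborhoods tile $T_6$, and defend each attack by translating the entire configuration by one edge. The key structural advantage I would exploit is that $T_6$ is vertex-transitive under \emph{all} integer translations, since adjacency of $(x,y)$ and $(x',y')$ depends only on the difference $(x'-x,y'-y)\in\{\pm(1,0),\pm(0,1),\pm(1,1)\}$. This makes the triangular case the cleanest of the four: unlike the hexagonal grid, there is no parity/bipartite obstruction forcing two sublattices to move in opposite directions, so a single uniform shift suffices.

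Concretely, I would set $S=\{(x,y)\in\mathbb{Z}^2 : x+2y\equiv 0 \pmod{7}\}$, equivalently the index-$7$ sublattice generated by $(2,-1)$ and $(1,3)$ (whose determinant is $7$), matching the $\nicefrac{1}{7}$ ratio announced in the introduction. Writing $c(x,y)=x+2y \bmod 7$, the first step is the observation that $c$ restricted to any closed neighborhood is a bijection onto $\mathbb{Z}_7$: evaluating $c$ on the seven offsets $(0,0),\pm(1,0),\pm(0,1),\pm(1,1)$ yields the residues $0,\pm1,\pm2,\pm3$, which are seven consecutive integers and hence pairwise distinct modulo $7$. Therefore, for every vertex $v$ exactly one element of $N[v]$ lies in $S$, that is $|N[v]\cap S|=1$. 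This single computation simultaneously shows that $S$ is a dominating set, that $\{N[s]:s\in S\}$ partitions $V(T_6)$, and that $S$ is \emph{strongly optimal}.

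For the eternal strategy, I would start the guards on $S$ and respond to an attack on a vertex $v\notin S$ as follows. By the previous step there is a \emph{unique} $s\in S\cap N[v]$; let $e=v-s$, which is one of the six edge directions. Every guard moves by $e$ simultaneously, a legal single-edge move for each of them, producing the configuration $S+e$; since translation is injective there are no collisions. By translation invariance of $T_6$, the set $S+e$ is again a (strongly optimal) dominating set, and since $v=s+e\in S+e$ the attack is defended and the invariant restored. Because all six attack directions are handled by the identical rule ``translate everyone by $e$,'' this strategy defends an arbitrary sequence of attacks, so $S$ is also an $m$-eternal dominating set.

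I expect no serious obstacle here; the only points needing care are the distinctness of the residues $0,\pm1,\pm2,\pm3 \bmod 7$ (which delivers the partition in one line) and the legality of the simultaneous shift (each guard traverses a single edge precisely because $e$ is an edge vector). A figure analogous to \Cref{fig:T3}, shading the seven-vertex closed neighborhoods and indicating the uniform shift, would make the argument transparent.
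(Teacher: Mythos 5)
Your proposal is correct and follows essentially the same approach as the paper: an index-$7$ sublattice of $\mathbb{Z}^2$ forming a perfect code (every closed neighborhood meets it exactly once), defended by translating all guards one edge in the direction $v-s$, which is legal precisely because adjacency in $T_6$ is translation-invariant. The only differences are cosmetic --- you pick the mirror-image sublattice (the paper's generators $(3,1)$, $(-1,2)$ correspond to the functional $2x+y \bmod 7$ rather than your $x+2y \bmod 7$), and your one-line residue computation showing that $c$ is a bijection on each closed neighborhood replaces the paper's more laborious row-by-row check of which guard dominates which vertex.
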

\begin{proof}
 We define vertex set $S$ on the infinite triangular grid $T_6$ as follows: $(0,0)\in S$; moreover, if $(x,y)\in S$ then also $(x+3,y+1)$, $(x-1,y+2)$, $(x-3,y-1)$ and $(x+1,y-2)$ belong to $S$. 
    This vertex set can be visualized in~\Cref{fig:T6}. 
    Note that the infinite set of closed neighborhoods $N[S]$ of each $s\in S$ is a partition of the vertices of $T_6$; therefore, $S$ is a dominating set of $T_6$.
    
    To prove this formally, observe that for any guard at $(x,y)$ there is a guard at $(x',y')=(x+3,y+1)$, a guard at $(x'',y'')=(x'+3,y'+1)=(x+6,y+2)$, and a guard at $(x''',y''')=(x''+1,y''-2)=(x+7,y)$; 
    Hence, looking at the rows of $T_6$, for each guard at $(x,y)$, the six vertices to its right do not contain any guard.
    Nevertheless, $(x+1,y)$ is dominated by the guard at $(x,y)$; 
    $(x+2,y)$ and $(x+3, y)$ are dominated by the guard at $(x',y')=(x+3, y+1)$;
    $(x+4,y)$ and $(x+5,y)$ are dominated by the guard at $(x'''',y'''')=(x'+1,y'-2)=(x+4,y-1)$;
    finally, $(x+6,y)$ is dominated by the guard at $(x''',y''')=(x+7,y)$.
    The generality of the reasoning proves that
    therefore, 
    $S$ is a dominating set of $T_6$. 

    $S$ is a configuration of an $m$-eternal dominating set of $T_6$. 
    Indeed,
    consider an attack to a vertex $v\in V\setminus S$. 
    Since $S$ is a dominating set and by construction, there exists a unique vertex $s\in S$ that is a neighbor of $v$. 
    It is not restrictive to consider the case $s=(i^*,j^*)$ and $v=(i^*+1,j^*)$ (the other five cases are analogous). 
    For every $(i,j)\in S$, the guard on $(i,j)$ (included $(i^*,j^*)$) moves to $(i+1,j)$. 
    The new position of the guards is a translation of $S$ by one unit in the same direction and thus still forms a dominating set. 
    Therefore, with this strategy, the guards move along configurations of an $m$-eternal dominating set.
    
    We have that $S$ is strongly optimal because for every $s\in S$, $S$ uses only one guard for the seven vertices of $N[s]$, formally 
    $|N[s]\cap S|=1$, and $\{N[s]~|~s\in S\}$ is a partition of $T_6$.\qed    
\end{proof}

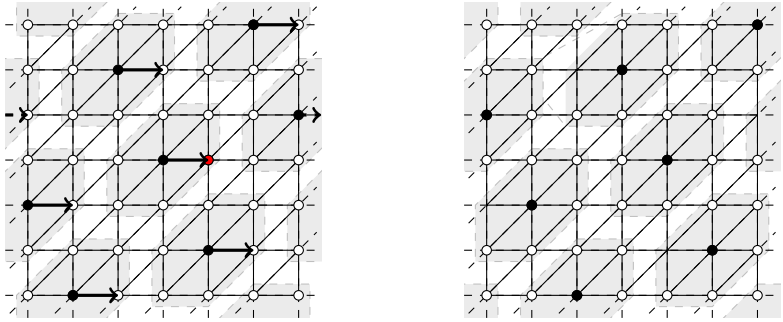
\begin{figure}[ht]
\centering
\hspace*{10mm}
\begin{minipage}{0.5\textwidth}
\begin{tikzpicture}[scale=0.6]

\clip (-3.5,-3.5) rectangle (3.5,3.5);

\fill[lightgray!32](0,1.25) -- (1.25,1.25) -- (1.25,0) -- (0, -1.25) -- (-1.25,-1.25) -- (-1.25, 0) -- cycle;
\draw[dashed, lightgray] (0,1.25) -- (1.25,1.25) -- (1.25,0) -- (0, -1.25) -- (-1.25,-1.25) -- (-1.25, 0) -- cycle;

\fill[lightgray!32](3,2.25) -- (4.25,2.25) -- (4.25,1) -- (3, -.25) -- (1.75,-.25) -- (1.75, 1) -- cycle;
\draw[dashed, lightgray] (3,2.25) -- (4.25,2.25) -- (4.25,1) -- (3, -.25) -- (1.75,-.25) -- (1.75, 1) -- cycle;

\fill[lightgray!32](2,4.25) -- (3.25,4.25) -- (3.25,3) -- (2, 1.75) -- (.75,1.75) -- (.75, 3) -- cycle;
\draw[dashed, lightgray] (2,4.25) -- (3.25,4.25) -- (3.25,3) -- (2, 1.75) -- (.75,1.75) -- (.75, 3) -- cycle;

\fill[lightgray!32](-1,3.25) -- (.25,3.25) -- (.25,2) -- (-1, .75) -- (-2.25,.75) -- (-2.25, 2) -- cycle;
\draw[dashed, lightgray] (-1,3.25) -- (.25,3.25) -- (.25,2) -- (-1, .75) -- (-2.25,.75) -- (-2.25, 2) -- cycle;

\fill[lightgray!32](-3,.25) -- (-1.75,.25) -- (-1.75,-1) -- (-3, -2.25) -- (-4.25,-2.25) -- (-4.25, -1) -- cycle;
\draw[dashed, lightgray] (-3,.25) -- (-1.75,.25) -- (-1.75,-1) -- (-3, -2.25) -- (-4.25,-2.25) -- (-4.25, -1) -- cycle;

\fill[lightgray!32](1,-.75) -- (2.25,-.75) -- (2.25,-2) -- (1, -3.25) -- (-.25,-3.25) -- (-.25, -2) -- cycle;
\draw[dashed, lightgray] (1,-.75) -- (2.25,-.75) -- (2.25,-2) -- (1, -3.25) -- (-.25,-3.25) -- (-.25, -2) -- cycle;

\fill[lightgray!32](-2,-1.75) -- (-.75,-1.75) -- (-.75,-3) -- (-2, -4.25) -- (-3.25,-4.25) -- (-3.25, -3) -- cycle;
\draw[dashed, lightgray] (-2,-1.75) -- (-.75,-1.75) -- (-.75,-3) -- (-2, -4.25) -- (-3.25,-4.25) -- (-3.25, -3) -- cycle;

\fill[lightgray!32](2,-2.75) -- (3.25,-2.75) -- (3.25,-4) -- (2, -5.25) -- (.75,-5.25) -- (.75, -4) -- cycle;
\draw[dashed, lightgray](2,-2.75) -- (3.25,-2.75) -- (3.25,-4) -- (2, -5.25) -- (.75,-5.25) -- (.75, -4) -- cycle;

\fill[lightgray!32](4,.25) -- (5.25,.25) -- (5.25,-1) -- (4, -2.25) -- (2.75,-2.25) -- (2.75, -1) -- cycle;
\draw[dashed, lightgray](4,.25) -- (5.25,.25) -- (5.25,-1) -- (4, -2.25) -- (2.75,-2.25) -- (2.75, -1) -- cycle;

\fill[lightgray!32](-4,2.25) -- (-2.75,2.25) -- (-2.75,1) -- (-4, -.25) -- (-5.25,-.25) -- (-5.25, 1) -- cycle;
\draw[dashed, lightgray] (-4,2.25) -- (-2.75,2.25) -- (-2.75,1) -- (-4, -.25) -- (-5.25,-.25) -- (-5.25, 1) -- cycle;

\fill[lightgray!32](-2,5.25) -- (-.75,5.25) -- (-.75,4) -- (-2, 2.75) -- (-3.25,2.75) -- (-3.25, 4) -- cycle;
\draw[dashed, lightgray] (-2,5.25) -- (-.75,5.25) -- (-.75,4) -- (-2, 2.75) -- (-3.25,2.75) -- (-3.25, 4) -- cycle;

\draw
(-3,-3)--(-3,3)
(-2,-3)--(-2,3)
(-1,-3)--(-1,3)
(0,-3)--(0,3)
(1,-3)--(1,3)
(2,-3)--(2,3)
(3,-3)--(3,3)

(-3,-3)--(3,-3)
(-3,-2)--(3,-2)
(-3,-1)--(3,-1)
(-3,0)--(3,0)
(-3,1)--(3,1)
(-3,2)--(3,2)
(-3,3)--(3,3)

(-3,2)--(-2,3)
(-3,1)--(-1,3)
(-3,0)--(0,3)
(-3,-1)--(1,3)
(-3,-2)--(2,3)
(-3,-3)--(3,3)
(-2,-3)--(3,2)
(-1,-3)--(3,1)
(0,-3)--(3,0)
(1,-3)--(3,-1)
(2,-3)--(3,-2)
;

\draw[dashed] 
(-3,-3.5)--(-3,3.5)
(-2,-3.5)--(-2,3.5)
(-1,-3.5)--(-1,3.5)
(0,-3.5)--(0,3.5)
(1,-3.5)--(1,3.5)
(2,-3.5)--(2,3.5)
(3,-3.5)--(3,3.5)

(-3.5,-3)--(3.5,-3)
(-3.5,-2)--(3.5,-2)
(-3.5,-1)--(3.5,-1)
(-3.5,0)--(3.5,0)
(-3.5,1)--(3.5,1)
(-3.5,2)--(3.5,2)
(-3.5,3)--(3.5,3)

(-3.4,1.6)--(-1.6,3.4)
(-3.4,0.6)--(-0.6,3.4)
(-3.4,-0.4)--(0.4,3.4)
(-3.4,-1.4)--(1.4,3.4)
(-3.4,-2.4)--(2.4,3.4)
(-3.4,-3.4)--(3.4,3.4)

(-2.4,-3.4)--(3.4,2.4)
(-1.4,-3.4)--(3.4,1.4)
(-0.4,-3.4)--(3.4,0.4)
(0.6,-3.4)--(3.4,-0.6)
(1.6,-3.4)--(3.4,-1.6)
;


\draw[fill=white]
(-3,-3) circle [radius=3pt]
(-3,-2) circle [radius=3pt]

(-3,0) circle [radius=3pt]
(-3,1) circle [radius=3pt]
(-3,2) circle [radius=3pt]
(-3,3) circle [radius=3pt]

(-2,-3) circle [radius=3pt]
(-2,-2) circle [radius=3pt]
(-2,-1) circle [radius=3pt]
(-2,0) circle [radius=3pt]
(-2,1) circle [radius=3pt]
(-2,2) circle [radius=3pt]
(-2,3) circle [radius=3pt]

(-1,-3) circle [radius=3pt]
(-1,-2) circle [radius=3pt]
(-1,-1) circle [radius=3pt]
(-1,0) circle [radius=3pt]
(-1,1) circle [radius=3pt]

(-1,3) circle [radius=3pt]

(0,-3) circle [radius=3pt]
(0,-2) circle [radius=3pt]
(0,-1) circle [radius=3pt]

(0,1) circle [radius=3pt]
(0,2) circle [radius=3pt]
(0,3) circle [radius=3pt]

(1,-3) circle [radius=3pt]

(1,-1) circle [radius=3pt]
(1,0) circle [radius=3pt]
(1,1) circle [radius=3pt]
(1,2) circle [radius=3pt]
(1,3) circle [radius=3pt]

(2,-3) circle [radius=3pt]
(2,-2) circle [radius=3pt]
(2,-1) circle [radius=3pt]
(2,0) circle [radius=3pt]
(2,1) circle [radius=3pt]
(2,2) circle [radius=3pt]
(2,3) circle [radius=3pt]

(3,-3) circle [radius=3pt]
(3,-2) circle [radius=3pt]
(3,-1) circle [radius=3pt]
(3,0) circle [radius=3pt]

(3,2) circle [radius=3pt]
(3,3) circle [radius=3pt]
; 

\draw[fill=black]
(0,0) circle [radius=3pt]
(3,1) circle [radius=3pt]
(-1,2) circle [radius=3pt]
(-3,-1) circle [radius=3pt]
(1,-2) circle [radius=3pt]

(-2,-3) circle [radius=3pt]
(2,3) circle [radius=3pt]
;

\draw[fill=red]
(1,0) circle [radius=3pt];

\draw[very thick,->](0,0)--(1,0);
\draw[very thick,->](-1,2)--(0,2);
\draw[very thick,->](-3,-1)--(-2,-1);
\draw[very thick,->](1,-2)--(2,-2);
\draw[very thick,->](-2,-3)--(-1,-3);
\draw[very thick,->](2,3)--(3,3);

\draw[very thick,->,dashed](3,1)--(3.5,1);
\draw[very thick,->,dashed](-3.5,1)--(-3,1);

\end{tikzpicture}
\end{minipage}%
\begin{minipage}{0.5\textwidth}
\begin{tikzpicture}[scale=0.6]

\clip (-3.5,-3.5) rectangle (3.5,3.5);

\fill[lightgray!32](1,1.25) -- (2.25,1.25) -- (2.25,0) -- (1, -1.25) -- (-0.25,-1.25) -- (-0.25, 0) -- cycle;
\draw[dashed, lightgray] (1,1.25) -- (2.25,1.25) -- (2.25,0) -- (1, -1.25) -- (-0.25,-1.25) -- (-0.25, 0) -- cycle;

\fill[lightgray!32](4,2.25) -- (5.25,2.25) -- (5.25,1) -- (4, -.25) -- (2.75,-.25) -- (2.75, 1) -- cycle;
\draw[dashed, lightgray] (4,2.25) -- (5.25,2.25) -- (5.25,1) -- (4, -.25) -- (2.75,-.25) -- (2.75, 1) -- cycle;

\fill[lightgray!32](3,4.25) -- (4.25,4.25) -- (4.25,3) -- (3, 1.75) -- (1.75,1.75) -- (1.75, 3) -- cycle;
\draw[dashed, lightgray] (3,4.25) -- (4.25,4.25) -- (4.25,3) -- (3, 1.75) -- (1.75,1.75) -- (1.75, 3) -- cycle;

\fill[lightgray!32](0,3.25) -- (1.25,3.25) -- (1.25,2) -- (0, .75) -- (-1.25,.75) -- (-1.25, 2) -- cycle;
\draw[dashed, lightgray] (0,3.25) -- (1.25,3.25) -- (1.25,2) -- (0, .75) -- (-1.25,.75) -- (-2.25, 2) -- cycle;

\fill[lightgray!32](-2,.25) -- (-.75,.25) -- (-.75,-1) -- (-2, -2.25) -- (-3.25,-2.25) -- (-3.25, -1) -- cycle;
\draw[dashed, lightgray] (-2,.25) -- (-.75,.25) -- (-.75,-1) -- (-2, -2.25) -- (-3.25,-2.25) -- (-3.25, -1) -- cycle;

\fill[lightgray!32](2,-.75) -- (3.25,-.75) -- (3.25,-2) -- (2, -3.25) -- (.75,-3.25) -- (.75, -2) -- cycle;
\draw[dashed, lightgray] (2,-.75) -- (3.25,-.75) -- (3.25,-2) -- (2, -3.25) -- (.75,-3.25) -- (.75, -2) -- cycle;

\fill[lightgray!32](-1,-1.75) -- (.25,-1.75) -- (.25,-3) -- (-1, -4.25) -- (-2.25,-4.25) -- (-2.25, -3) -- cycle;
\draw[dashed, lightgray] (-1,-1.75) -- (.25,-1.75) -- (.25,-3) -- (-1, -4.25) -- (-2.25,-4.25) -- (-2.25, -3) -- cycle;

\fill[lightgray!32](3,-2.75) -- (4.25,-2.75) -- (4.25,-4) -- (3, -5.25) -- (1.75,-5.25) -- (1.75, -4) -- cycle;
\draw[dashed, lightgray](3,-2.75) -- (4.25,-2.75) -- (4.25,-4) -- (3, -5.25) -- (1.75,-5.25) -- (1.75, -4) -- cycle;

\fill[lightgray!32](5,.25) -- (6.25,.25) -- (6.25,-1) -- (5, -2.25) -- (3.75,-2.25) -- (3.75, -1) -- cycle;
\draw[dashed, lightgray](5,.25) -- (6.25,.25) -- (6.25,-1) -- (5, -2.25) -- (3.75,-2.25) -- (3.75, -1) -- cycle;

\fill[lightgray!32](-3,2.25) -- (-1.75,2.25) -- (-1.75,1) -- (-3, -.25) -- (-4.25,-.25) -- (-4.25, 1) -- cycle;
\draw[dashed, lightgray] (-3,2.25) -- (-1.75,2.25) -- (-1.75,1) -- (-3, -.25) -- (-4.25,-.25) -- (-4.25, 1) -- cycle;

\fill[lightgray!32](-1,5.25) -- (.25,5.25) -- (.25,4) -- (-1, 2.75) -- (-2.25,2.75) -- (-2.25, 4) -- cycle;
\draw[dashed, lightgray] (-1,5.25) -- (.25,5.25) -- (.25,4) -- (-1, 2.75) -- (-2.25,2.75) -- (-2.25, 4) -- cycle;

\fill[lightgray!32](-4,4.25) -- (-2.75,4.25) -- (-2.75,3) -- (-4, 1.75) -- (-5.25,1.75) -- (-5.25, 3) -- cycle;
\draw[dashed, lightgray] (-4,4.25) -- (-2.75,4.25) -- (-2.75,3) -- (-4, 1.75) -- (-5.25,1.75) -- (-5.25, 3) -- cycle;

\fill[lightgray!32](-4,-2.75) -- (-2.75,-2.75) -- (-2.75,-4) -- (-4, -5.25) -- (-5.25,-5.25) -- (-5.25, -4) -- cycle;
\draw[dashed, lightgray](-4,-2.75) -- (-2.75,-2.75) -- (-2.75,-4) -- (-4, -5.25) -- (-5.25,-5.25) -- (-5.25, -4) -- cycle;

\draw
(-3,-3)--(-3,3)
(-2,-3)--(-2,3)
(-1,-3)--(-1,3)
(0,-3)--(0,3)
(1,-3)--(1,3)
(2,-3)--(2,3)
(3,-3)--(3,3)

(-3,-3)--(3,-3)
(-3,-2)--(3,-2)
(-3,-1)--(3,-1)
(-3,0)--(3,0)
(-3,1)--(3,1)
(-3,2)--(3,2)
(-3,3)--(3,3)

(-3,2)--(-2,3)
(-3,1)--(-1,3)
(-3,0)--(0,3)
(-3,-1)--(1,3)
(-3,-2)--(2,3)
(-3,-3)--(3,3)
(-2,-3)--(3,2)
(-1,-3)--(3,1)
(0,-3)--(3,0)
(1,-3)--(3,-1)
(2,-3)--(3,-2)
;

\draw[dashed] 
(-3,-3.5)--(-3,3.5)
(-2,-3.5)--(-2,3.5)
(-1,-3.5)--(-1,3.5)
(0,-3.5)--(0,3.5)
(1,-3.5)--(1,3.5)
(2,-3.5)--(2,3.5)
(3,-3.5)--(3,3.5)

(-3.5,-3)--(3.5,-3)
(-3.5,-2)--(3.5,-2)
(-3.5,-1)--(3.5,-1)
(-3.5,0)--(3.5,0)
(-3.5,1)--(3.5,1)
(-3.5,2)--(3.5,2)
(-3.5,3)--(3.5,3)

(-3.4,1.6)--(-1.6,3.4)
(-3.4,0.6)--(-0.6,3.4)
(-3.4,-0.4)--(0.4,3.4)
(-3.4,-1.4)--(1.4,3.4)
(-3.4,-2.4)--(2.4,3.4)
(-3.4,-3.4)--(3.4,3.4)

(-2.4,-3.4)--(3.4,2.4)
(-1.4,-3.4)--(3.4,1.4)
(-0.4,-3.4)--(3.4,0.4)
(0.6,-3.4)--(3.4,-0.6)
(1.6,-3.4)--(3.4,-1.6)
;


\draw[fill=white]
(-3,-3) circle [radius=3pt]
(-3,-2) circle [radius=3pt]

(-3,0) circle [radius=3pt]
(-3,1) circle [radius=3pt]
(-3,2) circle [radius=3pt]
(-3,3) circle [radius=3pt]

(-2,-3) circle [radius=3pt]
(-2,-2) circle [radius=3pt]
(-2,-1) circle [radius=3pt]
(-2,0) circle [radius=3pt]
(-2,1) circle [radius=3pt]
(-2,2) circle [radius=3pt]
(-2,3) circle [radius=3pt]

(-1,-3) circle [radius=3pt]
(-1,-2) circle [radius=3pt]
(-1,-1) circle [radius=3pt]
(-1,0) circle [radius=3pt]
(-1,1) circle [radius=3pt]

(-1,3) circle [radius=3pt]

(0,-3) circle [radius=3pt]
(0,-2) circle [radius=3pt]
(0,-1) circle [radius=3pt]

(0,1) circle [radius=3pt]
(0,2) circle [radius=3pt]
(0,3) circle [radius=3pt]

(1,-3) circle [radius=3pt]

(1,-1) circle [radius=3pt]
(1,0) circle [radius=3pt]
(1,1) circle [radius=3pt]
(1,2) circle [radius=3pt]
(1,3) circle [radius=3pt]

(2,-3) circle [radius=3pt]
(2,-2) circle [radius=3pt]
(2,-1) circle [radius=3pt]
(2,0) circle [radius=3pt]
(2,1) circle [radius=3pt]
(2,2) circle [radius=3pt]
(2,3) circle [radius=3pt]

(3,-3) circle [radius=3pt]
(3,-2) circle [radius=3pt]
(3,-1) circle [radius=3pt]
(3,0) circle [radius=3pt]

(3,2) circle [radius=3pt]
(3,3) circle [radius=3pt]
(0,0) circle [radius=3pt]
; 

\draw[fill=black]
(1,0) circle [radius=3pt]
(0,2) circle [radius=3pt]
(-2,-1) circle [radius=3pt]
(2,-2) circle [radius=3pt]

(-1,-3) circle [radius=3pt]
(3,3) circle [radius=3pt]

(-3,1)circle [radius=3pt]
;

\end{tikzpicture}
\end{minipage}
\caption{The graph $T_6$. In both figures, the black vertices represent the dominating set $S$ described in the proof of~\Cref{thm:triangularGrid}, while the shadowed zones represent how vertices of $S$ dominate their neighbors, thus creating a partition of the vertices of the graph. On the left, the thick arrowed edges represent the movement of the guards along the edges when the red vertex is attacked.}
\label{fig:T6}  
\end{figure}

\vspace{-10pt}
\section{Conclusions and Open Problems.} We studied the {\sc $m$-Eternal Domination} problem and variants on classes of finite graphs and infinite regular grids. In particular, we showed that {\sc $m$-Eternal Domination} is NP-hard even on bipartite graphs of diameter four. Moreover, the {\sc $m$-Eternal Roman Domination} and {\sc $m$-Eternal Italian Domination} problems are NP-hard even on split graphs. Finally, we showed optimal results for the {\sc Domination} and {\sc $m$-Eternal Domination} problems when considering four types of 
infinite regular grids: square, octagonal, hexagonal, and triangular.

For future work, we propose the following directions. We would like to consider infinite grids with vertex replacement, {\em i.e.}, each vertex in the grid is replaced with a fixed graph $H$ and carrying across edges as complete bipartite graphs or matchings or boundary-based connections. This preserves the overall infinite grid structure, but is also more intricate because, locally, the structure of $H$ would come into play. We have some preliminary observations for simple choices of $H$ and believe this to be a rich direction for future work. It is also interesting to determine the $m$-eternal dominating set for infinite grids with 1,2 or 3 bounded directions, and also extend our results to other domination variants (including the Roman and Italian questions).


%
%
%
\bibliographystyle{splncs04}
\bibliography{refs}

\begin{thebibliography}{10}
\providecommand{\url}[1]{\texttt{#1}}
\providecommand{\urlprefix}{URL }
\providecommand{\doi}[1]{https://doi.org/#1}

\bibitem{BDEMY17}
Bard, S., Duffy, C., Edwards, M., Macgillivray, G., Yang, F.: Eternal
  domination in split graphs. Journal of Combinatorial Mathematics and
  Combinatorial Computing  \textbf{101},  121--130 (2017)

\bibitem{B62}
Berge, C.: The theory of graphs and its applications. Methuen, London (1962)

\bibitem{VV16}
van Bommel, C.M., van Bommel, M.F.: Eternal domination numbers of 5$\times$ n
  grid graphs. Journal of Combinatorial Mathematics and Combinatorial Computing
   \textbf{97}(83), ~102 (2016)

\bibitem{BHW07}
Bonato, A., Hahn, G., Wang, C.: The cop density of a graph. Contributions to
  Discrete Mathematics  \textbf{2}(2) (2007)

\bibitem{Bal04}
Burger, A.P., Cockayne, E.J., Grundlingh, W., Mynhardt, C.M., van Vuuren, J.H.,
  Winterbach, W.: Infinite order domination in graphs. Journal of Combinatorial
  Mathematics and Combinatorial Computing  \textbf{50},  179--194 (2004)

\bibitem{CC22}
Calamoneri, T., Cor{\`{o}}, F.: ({E}ternal) vertex cover number of infinite and
  finite grid graphs. arXiv preprint arXiv:2209.05102. A short version is
  available in the Proc. of ICTCS 2023  \textbf{3587},  104--109 (2022)

\bibitem{Cal16}
Chellali, M., Haynes, T.W., Hedetniemi, S.T., McRae, A.A.: Roman
  {2}-domination. Discrete Applied Mathematics  \textbf{204},  22--28 (2016).
  \doi{10.1016/J.DAM.2015.11.013}

\bibitem{chen2018note}
Chen, H., Lu, C.: A note on roman 2-domination problem in graphs. arXiv
  preprint arXiv:1804.09338  (2018)

\bibitem{Cal04}
Cockayne, E.J., Dreyer~Jr, P.A., Hedetniemi, S.M., Hedetniemi, S.T.: Roman
  domination in graphs. Discrete mathematics  \textbf{278}(1-3),  11--22
  (2004). \doi{10.1016/J.DISC.2003.06.004}

\bibitem{D12}
Diestel, R.: Graph Theory, 4th Edition, Graduate texts in mathematics,
  vol.~173. Springer (2012)

\bibitem{FL23}
{Fernández, Lara}, {Leoni, Valeria}: New complexity results on roman
  {2}-domination. RAIRO - Operations Research  \textbf{57}(4),  1905--1912
  (2023). \doi{10.1051/ro/2023049}

\bibitem{FMV15}
Finbow, S., Messinger, M., van Bommel, M.: Eternal domination in {$3 \times n
  $} grids. Australasian Journal of Combinatorics  \textbf{61},  156--174
  (2015)

\bibitem{FMV20}
Finbow, S., van Bommel, M.F.: The eternal domination number for {$3 \times n $}
  grid graphs. Australasian Journal of Combinatorics  \textbf{76},  1--23
  (2020)

\bibitem{GJ79}
Garey, M.R., Johnson, D.S.: Computers and Intractability: {A} Guide to the
  Theory of NP-Completeness. W. H. Freeman (1979)

\bibitem{GHH05}
Goddard, W., Hedetniemi, S.M., Hedetniemi, S.T.: Eternal security in graphs.
  Journal of Combinatorial Mathematics and Combinatorial Computing
  \textbf{52}(01) (2005)

\bibitem{GKM13}
Goldwasser, J., Klostermeyer, W.F., Mynhardt, C.: Eternal protection in grid
  graphs. Utilitas Mathematica  \textbf{91},  47--64 (2013)

\bibitem{GPRT11}
Gon{\c{c}}alves, D., Pinlou, A., Rao, M., Thomass{\'e}, S.: The domination
  number of grids. SIAM Journal on Discrete Mathematics  \textbf{25}(3),
  1443--1453 (2011). \doi{10.1137/11082574}

\bibitem{HHS98}
Haynes, T.W., Hedetniemi, S., Slater, P.: Fundamentals of domination in graphs.
  Pure and applied mathematics, CRC press (2013)

\bibitem{HK17}
Henning, M.A., Klostermeyer, W.F.: Italian domination in trees. Discrete
  Applied Mathematics  \textbf{217},  557--564 (2017).
  \doi{10.1016/J.DAM.2016.09.035}

\bibitem{KM09}
Klostermeyer, W.F., Mynhardt, C.M.: Edge protection in graphs. Australasian
  Journal of Combinatorics  \textbf{45},  235--250 (2009)

\bibitem{LMS19}
Lamprou, I., Martin, R., Schewe, S.: Eternally dominating large grids.
  Theoretical Computer Science  \textbf{794},  27--46 (2019).
  \doi{10.1016/J.TCS.2018.09.008}

\bibitem{LC13}
Liu, C.H., Chang, G.J.: Roman domination on strongly chordal graphs. Journal of
  Combinatorial Optimization  \textbf{26}(3),  608--619 (2013).
  \doi{10.1007/S10878-012-9482-Y}

\bibitem{MS24}
Mahendra~Kumar, R., Sadagopan, N.: A closer look at hamiltonicity and
  domination through the lens of diameter and convexity. Acta Informatica
  \textbf{61},  357–382 (2024). \doi{10.1007/s00236-024-00461-z}

\bibitem{DM17}
Messinger, M.E.: Closing the gap: Eternal domination on $3\times n$ grids.
  Contributions to Discrete Mathematics  \textbf{12}(1) (2017)

\bibitem{NS15}
Nascimento, S., Sampaio, R.: Roman domination in subgraphs of grids. Electronic
  Notes in Discrete Mathematics  \textbf{50},  77--82 (2015).
  \doi{10.1016/J.ENDM.2015.07.014}

\bibitem{O62}
Ore, O.: Theory of graphs. In: Colloquium Publications. American Mathematical
  Society (1962)

\bibitem{R97}
ReVelle, C.: Can you protect the roman empire? Hopkins Magazine
  \textbf{49}(2), ~40 (1997)

\end{thebibliography}

\clearpage
\end{document}